\patchcmd{\maketitle}{\@copyrightspace}{}{}{}
\newcommand{\squishlist}{
 \begin{list}{$\bullet$}
  { \setlength{\itemsep}{0pt}
     \setlength{\parsep}{3pt}
     \setlength{\topsep}{3pt}
     \setlength{\partopsep}{0pt}
     \setlength{\leftmargin}{1.5em}
     \setlength{\labelwidth}{1em}
     \setlength{\labelsep}{0.5em} } }
\newcommand{\squishlisttwo}{
 \begin{list}{$\bullet$}
  { \setlength{\itemsep}{0pt}
     \setlength{\parsep}{0pt}
    \setlength{\topsep}{0pt}
    \setlength{\partopsep}{0pt}
    \setlength{\leftmargin}{2em}
    \setlength{\labelwidth}{1.5em}
    \setlength{\labelsep}{0.5em} } }
\newcommand{\squishend}{
  \end{list}  }
\newcommand{\onlyShort}[1]{\ifthenelse{\boolean{short}}{#1}{}}
\newcommand{\onlyLong}[1]{\ifthenelse{\boolean{short}}{}{#1}}
\newcommand{\dex}{\textsc{dex}}
\newcommand{\Low}{\textsc{Low}}
\newcommand{\High}{\textsc{Spare}}
\newcommand{\Spare}{\textsc{Spare}}
\newcommand{\cloud}[1]{\textsc{cloud}\ensuremath{(#1)}}
\newcommand{\insertion}{\texttt{insertion}}
\newcommand{\deletion}{\texttt{deletion}}
\newcommand{\computeHigh}{\texttt{computeSpare}}
\newcommand{\computeLow}{\texttt{computeLow}}
\newcommand{\inflate}{\texttt{simplifiedInfl}}
\newcommand{\lightInflate}{\texttt{inflate}}
\newcommand{\stagInflate}{\texttt{inflate}}
\newcommand{\deflate}{\texttt{simplifiedDefl}}
\newcommand{\lightDeflate}{\texttt{deflate}}
\newcommand{\stagDeflate}{\texttt{deflate}}
\newcommand{\graph}[1]{\langle#1\rangle}
\newcommand{\ra}{\rightarrow}
\newcommand{\eps}{\varepsilon}
\newcommand{\Sim}{\textsc{Sim}}
\newcommand{\NewSim}{\textsc{NewSim}}
\newcommand{\Load}{\textsc{Load}}
\newcommand{\NewLoad}{\textsc{NewLoad}}
\newcommand{\Z}{\mathcal{Z}}
\newcommand{\G}{\mathcal{G}}
\newcommand{\red}[1]{{\textcolor{red}{#1}}}
\newcommand{\todo}[1]{\noindent\textbf{TODO: }\marginpar{****}%
\renewcommand{\le}{\leqslant}
\renewcommand{\ge}{\geqslant}
\renewcommand{\leq}{\leqslant}
\renewcommand{\geq}{\geqslant}
\textit{\red{{#1}}}\textbf{ :ODOT}}
\newtheorem{definition}{Definition}
\newtheorem{lemma}{Lemma}
\newtheorem{claim}{Claim}
\newtheorem{corollary}{Corollary}
\newtheorem{fact}{Fact}
\newtheorem{theorem}{Theorem}
\newcommand{\E}{\mathrm{E}}
\newcommand{\Peter}[1]{\{\textbf{Peter:} #1 \}}
\newcommand{\dist}{\mathrm{dist}}
\renewcommand\section{\@startsection {section}{1}{\z@}%
                                  {-2ex \@plus -1ex \@minus -.2ex}%
                                  {2ex \@plus.2ex}%
                                  {\normalfont\normalsize\bfseries}}
\renewcommand\subsection{\@startsection{subsection}{2}{\z@}%
                                    {-2ex\@plus -1ex \@minus -.2ex}%
                                    {1ex \@plus .2ex}%
                                    {\normalfont\normalsize\bfseries}}
\newcommand{\sq}{\hbox{\rlap{$\sqcap$}$\sqcup$}}
\newcommand{\qed}{\hspace*{\fill}\sq}
\newenvironment{proof}{\noindent {\bf Proof.}\ }{\qed\par\vskip 4mm\par}
\begin{document}

\clubpenalty=10000 
\widowpenalty = 10000

\renewcommand{\le}{\leqslant}
\renewcommand{\ge}{\geqslant}
\renewcommand{\leq}{\leqslant}
\renewcommand{\geq}{\geqslant}

\newcommand{\Prob}[1]{\text{Pr}\left[#1\right]}



\title{DEX: Self-healing  Expanders}

\author{Gopal Pandurangan\thanks{ 
   Division of Mathematical Sciences, Nanyang Technological University, Singapore 637371
 and Department of Computer Science, Brown University, Providence, RI 02912, USA.   Email: {\tt gopalpandurangan@gmail.com}.
   Supported in part by Nanyang Technological University grant M58110000, Singapore Ministry of Education (MOE) Academic Research Fund (AcRF) Tier 2 grant MOE2010-T2-2-082, MOE  AcRF Tier 1 grant MOE2012-T1-001-094, and by a grant from the United States-Israel Binational Science Foundation (BSF).}
   \and
   Peter Robinson\thanks{  
 Division of Mathematical Sciences,
  Nanyang Technological University, Singapore 637371. 
Email: {\tt peter.robinson@ntu.edu.sg}.
Supported in part by Nanyang Technological University grant M58110000 and Singapore Ministry of Education (MOE) Academic Research Fund (AcRF) Tier 2 grant MOE2010-T2-2-082.}
   \and
   Amitabh Trehan\thanks{ 
   School of Engineering and Computer Science
   The Hebrew University of Jerusalem
   Jerusalem, Israel - 91904 
Email: {\tt amitabh.trehaan@gmail.com}. 
This research project was supported by the Israeli Centers of Research Excellence (I-CORE) program (Center No. 4/11). Work done in part while the author was at the Technion and supported by a Technion fellowship.}}

\date{}
\maketitle 


\begin{abstract}
We present a fully-distributed self-healing algorithm \dex \, that  maintains a  constant degree expander network  in a dynamic setting.  To the best of our knowledge, our algorithm provides  the first  efficient distributed construction of   expanders  --- whose expansion properties  hold {\em deterministically}  --- that works even under an all-powerful adaptive adversary that controls the dynamic changes to the network (the adversary has unlimited computational power and knowledge of the  entire network state, can decide which nodes join and leave and at what time, and knows the past random choices made by the algorithm).  Previous distributed expander constructions  typically provide only {\em probabilistic} guarantees on the network expansion  which {\em rapidly degrade} in a dynamic setting; in particular, the expansion properties can  degrade even more rapidly under {\em adversarial} insertions and deletions.

Our algorithm provides efficient maintenance  and incurs a low overhead per insertion/deletion by an adaptive adversary: only  $O(\log n)$  rounds and $O(\log n)$ messages are needed with high probability  ($n$ is the number of nodes currently in the network). The algorithm requires only a constant number of topology changes.  
Moreover, our algorithm allows for an efficient implementation and maintenance of a distributed hash table (DHT) on top of \dex\, with only a constant additional overhead.


Our results are a step towards implementing  efficient self-healing  networks that have \emph{guaranteed} properties (constant bounded degree and expansion)   despite  dynamic changes.
\end{abstract}



\vspace{-0.2in}



\section{Introduction}
Modern networks (peer-to-peer, mobile, ad-hoc,  Internet, social, etc.) are dynamic and increasingly resemble self-governed living entities with largely distributed control and coordination. In such a scenario, the network topology governs much of the functionality of the network.
In what topology should  such nodes (having limited resources and
bandwidth) connect so that  the network has effective communication
channels with low latency for all messages, has constant degree, is robust
to a limited number of failures, and nodes can quickly sample a random
node in the network (enabling many randomized protocols)? The well known
answer is that they should connect as a (constant degree) {\em expander}
(see e.g., \cite{DBLP:books/wi/AlonS92}). How should such a topology be constructed in a distributed fashion? 
The problem is especially challenging in a {\em dynamic} network, i.e., a network exhibiting churn with nodes and edges entering and leaving the system. 
Indeed, it is a fundamental problem to  scalably  build dynamic topologies that 
have the desirable properties of an expander graph (constant degree and expansion, regardless of the network size) in  a distributed manner such that the expander properties are {\em always} maintained despite continuous network changes. Hence it is of both theoretical and practical interest to maintain expanders dynamically
in an efficient manner.


Many previous works (e.g., \cite{PRU01, LS03-abbrv, mihail-p2p2006}) have addressed the above  problem, especially in the context of building dynamic  P2P (peer-to-peer) networks.
However, all these constructions 
provide only  {\em probabilistic} guarantees of the expansion properties that  {\em degrade rapidly} over a
series of  network changes (insertions and/or deletions of nodes/edges) ---
in the sense that expansion properties cannot be maintained ad infinitum   due to
their probabilistic nature\footnote{For example,  even if the network is guaranteed to be an
expander with high probability (w.h.p.), i.e.\ a probability of $1-1/n^c$, for some
constant $c$, in every step (e.g., as in the protocols of \cite{LS03-abbrv} and \cite{PRU01}), the
probability of violating the expansion bound  tends to $1$ after some
polynomial number of steps.} which  can be a major drawback in a  dynamic setting.
In fact, the expansion properties can  degrade even more rapidly under 
adversarial insertions and deletions (e.g., as in \cite{LS03-abbrv}).  Hence, in a dynamic setting, guaranteed
expander constructions are needed.
Furthermore, it is important that the network maintains its expander properties
(such as high conductance, robustness to failures, and fault-tolerant
multi-path routing) {\em efficiently} even under dynamic network changes.
This will be useful in efficiently building good overlay and P2P network topologies
with expansion guarantees that do not degrade with time, unlike the above approaches.

Self-healing is  a \emph{responsive} approach to
fault-tolerance, in the sense that it responds to an attack (or component
failure) by changing the topology of the network.  This  approach works
irrespective of the initial state of the network, and is thus orthogonal
and complementary to traditional  non-responsive techniques.
Self-healing assumes the network to be \emph{reconfigurable} (e.g. P2P, wireless
mesh, and ad-hoc networks), in the sense that changes to the
topology of the network can be made on the fly.  Our goal is to design an efficient distributed self-healing algorithm
that maintains an expander despite attacks from an adversary.

\noindent {\bf Our Model:}
We use the self-healing model which is similar 
 to the model introduced in  \cite{HayesFG-DCJournal-springerlink,Amitabh-2010-PhdThesis} and is briefly described here (the detailed model is described in Sec.~\ref{sec: model}).
We assume an adversary that repeatedly attacks the network. This adversary is adaptive and knows the network topology and our algorithm (and also previous insertions/deletions and all previous random choices), and it has the ability to delete arbitrary nodes from the  network or insert a new node in the system which it can connect to any subset of  nodes currently in the system.  
We also assume that the adversary  can only delete or insert a single node at a time step.\onlyShort{ (Our algorithm can be extended to handle multiple insertions/deletions ---cf.\ full paper \cite{full}).} The neighbors of the deleted or inserted node are aware of the attack in the same time step and the self-healing algorithm responds by adding or dropping edges (i.e. connections) between nodes. The computation of the algorithm proceeds in synchronous rounds and we assume that the adversary does not perform any more changes until the algorithm has finished its response.
 As typical in self-healing (see e.g.
\cite{HayesFG-DCJournal-springerlink,PanduranganPODC11-abbrv,Amitabh-2010-PhdThesis}), we assume that no
other  insertion/deletion takes place during the repair phase \footnote{One way to think about this assumption is that insertion/deletion steps happen somewhat
at a slower time scale compared to the time taken by the self-healing algorithm to repair; hence this motivates the need to design fast self-healing algorithms.}
(though our algorithm can be potentially extended to handle such a scenario). The goal is to minimize the number of distributed rounds taken by the self-healing
algorithm to heal the network.

\noindent {\bf Our Contributions:} 
In this paper, we present \dex, in our knowledge the first  {\em 
distributed} algorithm to efficiently  construct and dynamically maintain
a constant degree expander network
(under both insertions and deletions) 
under an all-powerful adaptive adversary. Unlike previous constructions
(e.g.,\cite{PRU01,LS03-abbrv, mihail-p2p2006, AspnesW09,JacobSS-Skip09-abbrv}),
the expansion properties always hold, i.e.,  the algorithm guarantees that the
dynamic network {\em always} has a constant spectral gap (for some fixed
absolute constant) despite continuous network changes, and has constant degree,
and  hence is a (sparse) expander\onlyShort{\footnote{The full paper \cite{full} contains the formal definition of an expander and related concepts such as expansion, spectral gap, and Cheeger inequality.}}. The maintenance overhead
of \dex\ is very low. It uses only local information and 
small-sized messages, and hence is scalable. 
%
%
%
The following theorem  states our main result: 

\newcommand{\mainThm}{%
  Consider an adaptive adversary that observes the entire state of the
  network including all past random choices and inserts or removes
  a single node in every step.
  Algorithm \dex\ maintains a constant
  degree expander network that has a constant spectral gap. 
  The algorithm takes $O(\log n)$ rounds and messages   in the worst case (with
high probability\footnote{With high probability (w.h.p.) means with probability $\ge
  1-n^{-1}$.}) per insertion/deletion where $n$ is the current network size.
Furthermore,  \dex\ requires only a constant number of topology changes.}
\begin{theorem} \label{thm:main}
  \mainThm
\end{theorem}

Note that the above bounds hold w.h.p.\ for \emph{every} insertion/deletion (i.e., in a worst case sense)  and not just in
an amortized sense.
Our algorithm can be extended to handle multiple insertions/deletions per step
\onlyShort{ (cf.\ full paper \cite{full})}%
\onlyLong{ in the appendix (cf.\ Appendix~\ref{sec:multiplechanges})}. 
We also describe (cf.~Sec.~\ref{sec:dht}) how to implement a distributed hashtable (DHT) on top of our algorithm \dex, which provides insertion and lookup operations using $O(\log n)$ messages and rounds.

Our results  answer some open questions raised in prior work.
 In ~\cite{mihail-p2p2006}, the authors ask: Can
%
one can design a fully decentralized construction of dynamic expander topologies with {\em constant} overhead? 
 The expander maintenance  algorithms  of \cite{mihail-p2p2006} and
 \cite{LS03-abbrv} handle deletions much less effectively than additions;
 \cite{mihail-p2p2006} also  raises the question of handling deletions as effectively
 as insertions.  Our algorithm handles even {\em adversarial} deletions as
 effectively as insertions. 

\noindent {\bf Technical Contributions:} Our approach differs from previous approaches
to expander maintenance (e.g., \cite{LS03-abbrv, PRU01,mihail-p2p2006}). Our
approach \emph{simulates} a virtual network (cf.\ Sec.~\ref{subsec:
virtualexpander}) on the actual (real) network. 
At a high level, \dex\ works by stepping between instances of the guaranteed expander networks (of different sizes as required) in the virtual graph.
It maintains a {\em balanced  mapping} (cf.\
Def.~\ref{def:virtualmapping})  between the two networks  with the
guarantee  that the spectral
properties and degrees of both are similar. 
The virtual network is maintained as
a $p$-cycle expander (cf.\ Def.~\ref{def:prime}). Since the adversary is fully adaptive with
complete knowledge of topology and past random choices,  it is non-trivial to efficiently maintain {\em
both} constant degree and constant spectral gap of the virtual graph.   Our
maintenance algorithm \dex\ uses randomization to defeat the
adversary and exploits various key algorithmic properties of expanders,
in particular, Chernoff-like concentration bounds for random walks (\cite{G1998:Chernoff}),
fast (almost) uniform sampling, efficient permutation routing
(\cite{S1998:Universal-abbrv}), and the relationship between edge expansion and
spectral gap as stated by the Cheeger Inequality\onlyShort{ (cf.\ Theorem~2.6 in
  \cite{Wigderson-exsurvey})}\onlyLong{ (cf.\
  Theorem~\ref{thm:cheeger} in App.~\ref{app-sec:preliminaries})}. Moreover, we use certain structural properties of the $p$-cycle and staggering
of ``complex'' steps that require more involved recovery operations over multiple ``simple'' steps to achieve worst case $O(\log n)$ complexity bounds.
It is technically and conceptually much more convenient to work  on the
(regular) virtual network   and this can be a useful algorithmic paradigm
in handling other dynamic problems as well.


\noindent {\bf Related Work and Comparison:} 
\label{subsec: relatedworks}
Expanders are  a very important class of graphs that have applications in
various areas of computer science (e.g., see
~\cite{Wigderson-exsurvey} for a  survey) e.g. 
in distributed networks, expanders are used
for solving distributed agreement problems efficiently\cite{KingFOCS06-abbrv,
  APRU2012:Towards}.
In distributed dynamic networks (cf.\ \cite{APRU2012:Towards}) it is
particularly important that the expansion does not degrade over time.
There are many well known (centralized)
expander construction
techniques
see e.g., ~\cite{Wigderson-exsurvey}) .

As  stated earlier,   there are  a few other works addressing  the problem of distributed expander
construction; however all of these are randomized and the expansion
properties hold with probabilistic guarantees only.  Figure~\ref{tab: AlgoCompare}
compares our algorithm with some known distributed expander construction
algorithms. 
\cite{LS03-abbrv} give a construction where an expander is constructed by composing a small number of random Hamiltonian cycles. 
 The probabilistic guarantees provided degrade rapidly, especially under adversarial deletions.
\cite{mihail-p2p2006} builds on the algorithm of \cite{LS03-abbrv} and makes use of random walks to add new peers with only constant overhead. However, it is not a fully decentralized algorithm. Both these algorithms handle insertions much better than deletions. 
Spanders~\cite{DolevSpanders2010} is a self-stabilizing construction of an expander network that is a spanner of the graph. 
\cite{CooperFlipMarkovChainPODC09-abbrv} shows a way of
constructing random regular graphs (which are good expanders, w.h.p.) by performing a series of random `flip' operations on the graph's edges.
\cite{ReiterDistNetwork-SRDS2005-abbrv} maintains an almost $d$-regular graph, i.e. with degrees varying around $d$, using uniform sampling to select, for each node, a set of expander-neighbors. 
The protocol of \cite{PRU01} 
gives a distributed algorithm for maintaining a sparse random graph under a stochastic model of insertions and deletions.
\cite{MelamedK08} gives a dynamic overlay construction that is empirically shown to resemble a random k-regular graph and hence is a good expander. \cite{GurevichK10} gives a gossip-based membership protocol for maintaining an overlay in a dynamic network that under certain circumstances provides an expander.

In a model similar to ours, \cite{Kuhn2005-Repairing-abbrv} 
maintains a DHT (Distributed Hash Table) in the setting where an adaptive adversary can add/remove
$O(\log n)$ peers per step. Another paper which considers node
joins/leaves  is \cite{JacobSS-Skip09-abbrv} which constructs a SKIP+ graph within $O(\log^2 n)$ rounds starting from any graph whp. Then, they also show that after an insert/delete operation  the system recovers within $O(\log n)$ steps (like ours, which also needs $O(\log n)$ steps whp) and with $O(\log^4 n)$ messages (while ours takes $O(\log n)$ messages whp).
However, the SKIP+ graph has an additional advantage that it is {\em self-stabilizing},
i.e., can recover from any initial state (as long as it is weakly connected). 
\cite{JacobSS-Skip09-abbrv} assume (as do we) that the adversary rests while the network converges to a SKIP+ graph.
 It was shown in \cite{AspnesW09}
that skip graphs contain expanders as subgraphs w.h.p., which can be used
as a randomized expander construction.  Skip graphs (and its variant SKIP$+$~\cite{JacobSS-Skip09-abbrv}) 
are probabilistic structures (i.e., their expansion holds only with high
probability) and furthermore, they are not of constant degree, their
degree  grows logarithmic in the network size.  The work
of~\cite{DBLP:journals/talg/NaorW07} 
  has  guaranteed expansion (like ours). However, as pointed out in~\cite{AspnesW09}, its main drawback (unlike ours) is that their algorithm has a rather large overhead in maintaining the network.
 
 A variety of self-healing algorithms deal with maintaining topological invariants on arbitrary graphs~\cite{HayesFG-DCJournal-springerlink, PanduranganPODC11-abbrv,Amitabh-2010-PhdThesis,HayesPODC08-abbrv,SaiaTrehanIPDPS08-abbrv}.
The self-healing algorithm {\em
Xheal} of~\cite{PanduranganPODC11-abbrv} 
maintains  spectral properties of the network (while allowing only a small increase in
stretch and degree), but it  relied on a randomized expander construction
and hence the spectral properties  degraded rapidly.
Using our algorithm as a subroutine, {\em
Xheal}  can be efficiently implemented with guaranteed spectral properties.

\section{The Self-Healing Model}
\label{sec: model}
\onlyShort{\vspace{-0.3cm}}
 

The model we are using is similar to the models used in 
~\cite{HayesFG-DCJournal-springerlink,PanduranganPODC11-abbrv}.
We now describe the details.  Let $G = G_0$ be a small arbitrary
graph\footnotemark[\value{footnote}] 
where nodes represent processors in a distributed network and edges represent the links between them.
Each \emph{step} $t\ge 1$ is triggered by a deletion or insertion of a
single\onlyShort{\footnote{See the full paper \cite{full} for multiple insertions/deletions per step.}}\onlyLong{\footnote{See Appendix~\ref{sec:multiplechanges} for multiple 
    insertions/deletions per step.}} node from $G_{t-1}$ by the adversary, yielding an
\emph{intermediate network graph} $U_t$.
The neighbors of the (inserted or deleted) node in the network $U_{t}$
react to this change by adding or removing edges in $U_t$, yielding
$G_{t}$ --- this is called {\em recovery or repair}.
The distributed computation during recovery is structured into
synchronous rounds.
We assume that the adversary rests until the recovery is complete, and subsequently triggers the next step by inserting/deleting a node.
During recovery, nodes can communicate with their neighbors by sending messages of size $O(\log n)$, which are neither lost nor corrupted.
We assume that local computation (within a node) is free, which is a standard  assumption in distributed
computing (e.g.\ \cite{pelegDCbook}).
Our focus is only on the cost of communication (time and messages).

Initially, a newly inserted node $v$ only knows its unique id (chosen by the
adversary) and does not have any a priori knowledge of its
neighbors or the current network topology.
In particular, this means that a node $u$ can only add an edge to a node
$w$ if it knows the id of $w$.


In case of an insertion, we assume that the newly added node is initially
connected to a constant number of other nodes. This is merely a
simplification; nodes are not malicious but faithfully follow the
algorithm, thus we could explicitly require our algorithm to immediately
drop all but a constant number of edges.
The adversary is \emph{fully adaptive} and is aware of our algorithm, the
complete state of the current network including all past random choices.
As typically the  case (see e.g.\
\cite{HayesFG-DCJournal-springerlink,PanduranganPODC11-abbrv}), we assume that no
other  node is deleted or inserted until the current step has concluded
(though our algorithm can be modified to handle such a scenario).

\onlyShort{\vspace{-0.3cm}}
\section{Preliminaries and Overview of Algorithm \dex}
\label{sec: prelims}
\onlyShort{\vspace{-0.2cm}}

\begin{figure*}[t]



{\small
\begin{threeparttable}[b]
\begin{tabular}{|l|l|l|c|c|c|c|c|c|}
\hline
Algorithms & Expansion Guarantees& Adversary  & Max Degree & Recovery Time & Messages & Topology Changes \\
 \hline
 Law-Siu\cite{LS03-abbrv}\tnote{\$} & Prob$\ge 1-1/n_0$ & Oblivious &  $O(d)$ & $O(\log_d n)$    & $O(d\log_d n)$  & $O(d)$ \\
   \hline
    Skip Graphs~\cite{AspnesW09}\tnote{$\ddagger$}& w.h.p.\tnote{$\dagger$} & Adaptive  &  $O(\log n)$ &$O(\log^2 n)  $  & $ O(\log^2 n) $   &  $O(\log n)$  \\
 \hline
 Skip+~\cite{JacobSS-Skip09-abbrv}\tnote{$!$} & w.h.p.\tnote{$\dagger$} &  Adaptive & $ O(\log n) $ &$O(\log n) \tnote{$\dagger$} $  & $ O(\log^4 n) $   &  $O(\log^4 n) \tnote{$\dagger$}$  \\
 \hline
\dex\ (This paper) & Deterministic  &  Adaptive & $O(1)$ &  $O(\log n) \tnote{$\dagger$}$ & $ O(\log n )$\tnote{$\dagger$}  & $O(1)$  \\
  \hline
 \end{tabular}
 \begin{tablenotes} 
 \item[$\dagger$] With high probability.
   \item[\$] $n_0$ is the initial network size.
Parameter $d$ = \# of Hamiltonian cycles in 'healing' graph ($\mathbb{H}$). 
   \item[$\ddagger$] Costs given under certain assumptions about key length.
  \item[$!$] Skip+ is a self-stabilizing structure but  costs here are for single join/leave operations once a valid skip+ graph is achieved.
\end{tablenotes} 
\caption{\small Comparison of distributed expander constructions.  }
\end{threeparttable}

\label{tab: AlgoCompare}
} 

\end{figure*}

It is instructive to first consider the
following natural (but inefficient) algorithms:

\noindent\textbf{Flooding:}
First, we consider a naive flooding-based algorithm that also achieves
guaranteed expansion and node degree bounds, albeit at a much larger cost:
Whenever a node is inserted (or
deleted), a neighboring node floods a notification throughout the entire network
and every node, having complete knowledge of the current network graph, 
locally recomputes the new expander topology.  While this achieves a logarithmic
runtime bound, it comes at the cost of using $\Theta(n)$ messages in
\emph{every} step and, in addition, might also result in $O(n)$ topology
changes, whereas our algorithms requires only polylogarithmic number of messages
and constant topology changes on average.

\noindent\textbf{Maintaining Global Knowledge:} As a second example of a 
straightforward but inefficient solution,
consider the algorithm that maintains a global knowledge at some node $p$, 
which keeps track of the entire network topology.
Thus, every time some node $u$ is inserted or deleted, the neighbors of
$u$ inform $p$ of this change, and $p$ then proceeds to
update the current graph using its global knowledge.
However, when $p$ itself is deleted, we would need to transfer all of its
knowledge to a neighboring node $q$, which then takes over $p$'s role.
This, however, requires at least $\Omega(n)$ rounds, since the 
\emph{entire} knowledge of the network topology needs to be transmitted to 
$q$.

\noindent\textbf{Our Approach --- Algorithm \dex:} 
As  mentioned in Sec.\ \ref{sec: model}, the actual (real) network is
represented by a graph where nodes correspond to processors and edges to
connections.
Our algorithm maintains a second graph, which we call the {\em virtual
  graph} where the vertices do not directly correspond to the real network but each (virtual)
vertex in this graph is simulated by a (real) node \footnote{Henceforth, we reserve the term ``vertex'' for vertices in a virtual graph and
(real) ``node'' for vertices in the real network.} in the network.
The topology of the virtual graph  determines the connections in the actual
network.
For example, suppose that node $u$ simulates vertex $z_1$ and node $v$
simulates vertex $z_2$.
If there is an edge $(z_1,z_2)$ according to the virtual graph, then our
algorithm maintains an edge between $u$ and $v$ in the actual network.
In other words, a real node may be simulating multiple virtual vertices
and maintaining their edges according to the virtual graph.


Figure~\ref{fig:graphs} on page~\pageref{fig:graphs} shows a real network (on the right) whose nodes
(shaded rectangles) simulate the virtual vertices of the virtual
graph (on the left).
\onlyShort{
\begin{figure}[h]
\centering
  \includegraphics[scale=0.60]{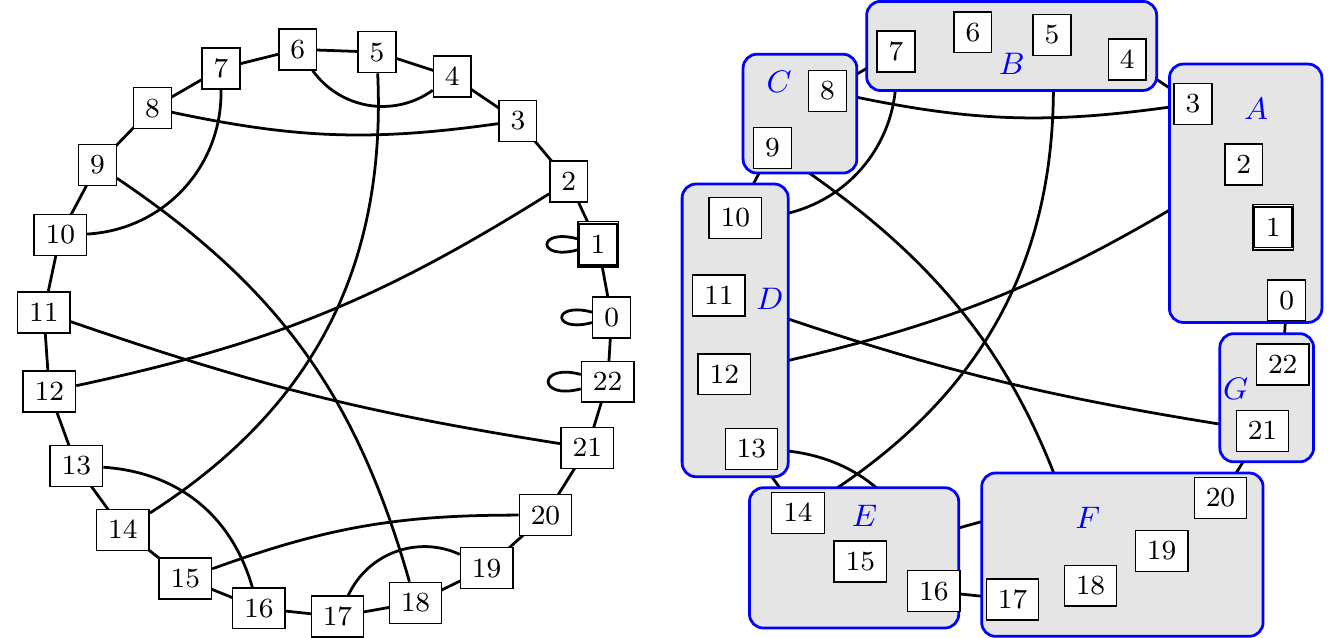}
\caption{ A $4$-balanced virtual mapping of a $p$-cycle expander to the
network graph. On the left is a (virtual) $3$-regular $23$-cycle expander
on $\mathbb{Z}_{23}$; on the right is the network $G_t$ with (real) nodes $\{A,\dots,G\}$.}
\label{fig:graphs}
\end{figure}
}
In our algorithm, we maintain this
virtual graph and show that preserving certain desired properties (in particular, constant expansion and degree) in the
virtual graph leads to these properties being preserved in the real network. 
%
Our algorithm achieves this by
maintaining a ``balanced load mapping'' (cf.\ Def.\ \ref{def:cbalanced}) between the virtual vertices and the
real nodes as the network size changes at the will of the adversary. The balanced load mapping
keeps the number of virtual nodes simulated by any real node to be a constant --- this is crucial
in maintaining the constant degree bound. 
We next formalize the notions of virtual graphs and balanced mappings.
\onlyShort{\vspace{-0.1in}}
\subsection{Virtual Graphs and Balanced Mappings} \label{subsec: virtualexpander}
\onlyLong{
\begin{figure}[h]
\centering
  \includegraphics[scale=0.6]{PCycle} 
\caption{ A $4$-balanced virtual mapping of a $p$-cycle expander to the
network graph. On the left is a (virtual) $3$-regular $23$-cycle expander
on $\mathbb{Z}_{23}$; on the right is the network $G_t$ with (real) nodes $\{A,\dots,G\}$.}
\label{fig:graphs}
\end{figure}
}

Consider some graph $G$ and let $\lambda_G$ denote the \emph{second largest
  eigenvalue} of the adjacency matrix of $G$.
The \emph{contraction} of vertices $z_1$ and $z_2$ produces a graph $H$
where $z_1$ are $z_2$ merged into a single vertex $z$ that is adjacent to
all vertices to which $z_1$ or $z_2$ were adjacent in $G$.
We extensively make use of the fact that this operation leaves the
\emph{spectral gap} $1-\lambda_G$ intact, cf.\onlyShort{ Lemma~1.15 in \cite{chungbook}}\onlyLong{ Lem.~\ref{lem:contraction} in App.~\ref{app-sec:preliminaries}}.

As mentioned earlier, our virtual graph consists of virtual vertices
simulated by real nodes. 
Intuitively speaking, we can think of a real node simulating $z_1$ and
$z_2$ as a vertex contraction of $z_1$ and $z_2$.
The above stated contraction property motivates us to use an expander family\onlyLong{ (cf.\
  Def.\ \ref{def:expanders} in App.~\ref{app-sec:preliminaries})} as virtual
graphs.
We now define the $p$-cycle expander family, which we use as virtual
graphs in this paper.
Essentially, we can think of a $p$-cycle as a numbered cycle with some
chord-edges between numbers that are multiplicative inverses of each
other.
It was shown in \cite{L1994:Discrete} that this yields an infinite family
of $3$-regular expander graphs with a constant eigenvalue gap.
Figure~\ref{fig:graphs} shows a $23$-cycle.
\begin{definition}[{\boldmath \bf $p$-cycle}, cf.\ \cite{Wigderson-exsurvey}] \label{def:prime}
For any prime number $p$, we define the following graph $\Z(p)$. 
The vertex set of $\Z(p)$ is the set $\mathbb{Z}_p=\{0,\dots,p-1\}$ and there is an edge between vertices $x$ and $y$ if and only if one of the following 
conditions hold:
(1) $y = (x + 1) \mod p$,
(2) $y = (x - 1) \mod p$, or
(3) if $x,y>0$ and $y = x^{-1}$.
Moreover, vertex $0$ has a self-loop. 
\end{definition}

At any point in time $t$, our algorithm maintains a mapping from the virtual
vertices of a $p$-cycle to the actual network nodes.
We use the notation $\Z_t(p)$ when $\Z(p)$ is the $p$-cycle that we are using
for our mapping in step $t$.
(We omit $p$ and simply write $\Z_t$ if $p$ is irrelevant or clear from the
context.)
At any time $t$, each real node simulates at least one virtual vertex (i.e.\ a
vertex in the $p$-cycle) and all its incident edges as required by Def.\
\ref{def:prime}, i.e., the real network
can be considered a contraction of the virtual graph; see
Figure~\ref{fig:graphs} on page~\pageref{fig:graphs} for an example.
Formally, this defines a function that we call a virtual mapping:

\begin{definition}[\bf Virtual mapping] \label{def:virtualmapping}
For step $t\ge 1$, consider a surjective map $\Phi_{t} : V(\Z_t)
\rightarrow V(G_t)$ that maps every {virtual vertex} of the virtual graph
$\Z_t$ to some {(real) node} of the network graph $G_t$.
Suppose that there is an edge $(\Phi_t(z_1), \Phi_t(z_2)) \in E(G_t)$ if
and only if there is an edge $(z_1,z_2) \in E(Z_t)$, for all nodes $z_1, z_2 \in V(Z_t)$.
Then we call $\Phi_{t}$ a \emph{virtual mapping}.
Moreover, we say that node $u \in V(G_t)$ is a real node that
\emph{simulates} virtual vertices $z_1,\cdots,z_k$, if
$u=\Phi_t(z_1)=\cdots=\Phi_t(z_k)$.
\end{definition}
In the standard metric spaces on $\Z_t$ and $G_t$ induced by the
shortest-path metric $\Phi$ is a surjective {metric map} since distances
do not increase:
\begin{fact} \label{fact:distances}
  Let $\dist_H(u,v)$ denote the length of the shortest path between
  $u$ and $v$ in graph $H$.
  Any virtual mapping $\Phi_t$ guarantees that $\dist_{Z_t}(z_1,z_2) \ge
  \dist_{G_t}(\Phi(z_1),\Phi(z_2))$, for all $z_1, z_2 \in \Z_t$.
\end{fact}
We simply write $\Phi$ instead of $\Phi_t$ when $t$ is irrelevant.

We consider the vertices of $\Z_t$ to be partitioned into disjoint sets of
vertices that we call \emph{clouds} and denote the cloud to which a vertex
$z$ belongs as $\cloud{z}$.
Whereas initially we can think of a cloud as the set of virtual vertices
simulated at some node in $G_t$, this is not true in general due to load
balancing issues, as we discuss in Section~\ref{sec:detexpalgo}.
We are only interested in virtual mappings where the
\emph{maximum cloud size} is bounded by some universal constant $\zeta$,
which is crucial for maintaining a constant node degree.
For our $p$-cycle construction, it holds that $\zeta \le 8$.

We now formalize the intuition that the expansion of the
virtual $p$-cycle carries over to the network, i.e., the second largest
eigenvalue $\lambda_{G_t}$ of the real network is bounded by
$\lambda_{Z_t}$ of the virtual graph.
Observe that we can obtain $G_t$ from $\Z_t$ by
  contracting vertices. That is, we contract vertices $z_1$ and $z_2$
  if $\Phi(z_1)=\Phi(z_2)$. 
\onlyShort{According to Lemma~1.15 in \cite{chungbook}}\onlyLong{According to Lemma~\ref{lem:contraction},} these operations do not
  increase $\lambda_{G_t}$ and thus we have shown the following:

\newcommand{\lemFanChung}{
  Let $\Phi_t:\Z_t\ra G_t$ be a virtual mapping. Then it holds that
    $\lambda_{G_t} \le \lambda_{\Z_t}$.
}
\begin{lemma}
  \label{lem:fanChung}
  \lemFanChung
\end{lemma}

Next we formalize the notion that our real nodes simulate at most a constant number of nodes.
Let $\Sim_{t}(u)=\Phi_t^{-1}(u)$ and define the \emph{load of a node $u$
  in graph $G_t$} as the number of vertices simulated at $u$, i.e., 
$\Load_{t}(u)=|\Sim_{t}(u)|$.
Note that due to locality, node $u$ does not necessarily know the mapping
of other nodes.
\begin{definition}[\bf Balanced mapping]\label{def:cbalanced}
Consider a step $t$.
If there exists a constant $C$ s.t.\
$\forall u \in G_t\colon \Load_{t}(u) \le C, $
then we say that $\Phi_{t}$ is a \emph{$C$-balanced virtual mapping} and 
say that \emph{$G_t$ is $C$-balanced}.
\end{definition}
Figure~\ref{fig:graphs} on page~\pageref{fig:graphs} shows a balanced virtual mapping.
At any step $t$, the degree of a node $u\in G_t$ is exactly
$3.\Load_t(u)$ since we are using the $3$-regular $p$-cycle as a virtual graph. 
Thus our algorithm strives to maintain a constant bound on $\Load_t(u)$, for all $t$.
Given a virtual mapping $\Phi_t$, we define the
\onlyLong{(not necessarily disjoint) }sets
\begin{align}
  \Low_{t} &= \{ u \in G_t\colon \Load_{t}(u) \le 2\zeta \}; \label{eq:low}\\
  \High_{t} &= \{ u \in G_t\colon \Load_{t}(u) \ge 2 \}. \label{eq:spare}
\end{align}
%
Intuitively speaking, $\Low_{t}$ contains nodes that do not
simulate too many virtual vertices, i.e., have relatively low degree, whereas
$\High_t$ is the set of nodes that simulate at least $2$ vertices each. When the
adversary deletes some node $u$, we need to find a node in $\Low_t$ that takes
over the load of $u$. Upon a node $v$ being inserted, on the other hand, we
need to find a node in $\High_t$ that can spare a virtual vertex for $v$,
while maintaining the surjective property of the virtual mapping.

\section{Expander Maintenance Algorithm}
\label{sec:detexpalgo}
\onlyShort{\vspace{-0.1in}}
We  describe our maintenance algorithm \dex\onlyShort{ (the complete pseudo-code is in the full paper \cite{full})} and prove\onlyShort{ (omitted proofs are included 
  in the full paper \cite{full})} the
performance claims of Theorem~\ref{thm:main}.
We start with a small initial network $G_0$ of some appropriate constant
and assume there is a virtual mapping from a $p$-cycle
$\Z_0(p_0)$ where $p_0$ is the smallest prime number in the range
$(4n_0,8n_0)$.
The existence of $p_0$ is guaranteed by Bertrand's
postulate~\cite{Bertrand1850}.
(Since $G_0$ is of constant size, nodes can compute the current
network size $n_0$ and $\Z_0(p_0)$ in a constant number of rounds in a
centralized manner.)
Starting out from this initial expander, we seek to guarantee expansion
ad infinitum, for any number of adversarial insertions and deletions.

We always maintain the invariant that each real node {simulates} at
least one (i.e.\ the virtual mapping is surjective) and at most a constant number of virtual $p$-cycle vertices.
The adversary can either insert or delete a node in every step.
In either case, our algorithm reacts by doing an appropriate
redistribution of the virtual vertices to the real nodes with the goal of
maintaining a $C$-balanced mapping (cf.\ Definition~\ref{def:cbalanced}). 

Depending on the operations employed by the algorithm, we classify the response
of the algorithm for a given step $t$ as being either a \emph{type-1 recovery} or a
\emph{type-2 recovery} and call $t$ a \emph{type-1 recovery step} (resp.\
type-2 recovery step).
Type-1 recovery is very efficient, as (w.h.p.) it suffices to execute a single random walk of $O(\log n)$ length.

It is somewhat more complicated to show a worst case $O(\log n)$ performance
for type-2 recovery: Here, the current virtual graph is either \emph{inflated} or
\emph{deflated} to ensure a $C$-balanced mapping (i.e.\ bounded degrees).
For the sake of exposition, we  first present a simpler way to handle inflation and deflation, which yields amortized complexity bounds.
We then describe a more complicated algorithm for type-2 recovery that yields the claimed {\em worst case} complexity bounds of $O(\log n)$ rounds and messages, and $O(1)$ topology changes per step with high probability.
The first (simplified) approach (cf.\ Sec.~\ref{sec:infldefl}) replaces the
entire virtual graph by a new virtual graph of appropriate size in a single step.
This requires $O(n)$ topology changes and $O(n\log^2 n)$ message complexity,
because all nodes complete the inflation/deflation in one step.
Since there are at least $\Omega(n)$ steps with type-1 recovery between any
two steps where inflation or deflation is necessary, we can nevertheless amortize their cost and get the amortized performance bounds of $O(\log n)$ rounds and $O(\log^2 n)$ messages (cf.\ Cor.~1\onlyShort{ in the full paper \cite{full})}\onlyLong{)}.
We then present an improved (but significantly more complex) way of handling inflation (resp.\ deflation), by \emph{staggering} these
inflation/deflation operations across the recovery of the next $\Theta(n)$ following steps while retaining constant expansion and node degrees.
This yields a {\em $O(\log n)$ worst case bounds for both messages and rounds} for \emph{all} steps as claimed by Theorem~\ref{thm:main}.
In terms of expansion, the (amortized) inflation/deflation approach yields a
spectral gap no smaller than of the $p$-cycle, the improved worst case bounds of
the 2nd approach come at the price of a slightly reduced, but still constant,
spectral gap.
Algorithm~\ref{algo:highlevel} presents a high-level pseudo code description of
our approach.

\onlyShort{
\begin{algorithm}[h!]
  \small
\begin{algorithmic}
\item[] 
\item[] \textbf{Case 1: Adversary inserts a node $u$}:
\STATE Try to find a spare vertex for $u$ via a random walk
(\textbf{type-1 recovery}). 
\IF{type-1 recovery fails} 
  \IF{most nodes simulate only $1$ vertex} 
    \STATE Perform \textbf{type-2 recovery} by inflating.
  \ELSE
    \STATE Retry type-1 recovery until it succeeds.
  \ENDIF
\ENDIF
\item[] 
\item[] \textbf{Case 2: Adversary deletes a node $u$}:
\STATE Try distributing vertices that were simulated at $u$ via random walks
(\textbf{type-1
recovery}). 
\IF{type-1 recovery fails} 
  \IF{most nodes simulate many vertices} 
    \STATE Perform \textbf{type-2 recovery} by deflating.
  \ELSE
    \STATE Retry type-1 recovery until it succeeds.
  \ENDIF
\ENDIF
\end{algorithmic}
\caption{High-level overview of our algorithm \dex}
\label{algo:highlevel}
\end{algorithm}
}

\onlyShort{\vspace{-0.1in}}
\subsection{Type-1 Recovery} \label{sec:simple}
\onlyShort{\vspace{-0.1in}}

When a node $u$ is inserted, a neighboring node $v$ initiates a random walk of
length at most $\Theta(\log n)$ to find a ``spare'' virtual vertex, i.e., a
virtual vertex $z$ that is simulated by a node $w \in \High_{G_{t-1}}$\onlyLong{ (see
  Algorithm~\ref{app-algo:insertion} for the detailed pseudo code)}. 
Assigning this virtual vertex $z$ to the new node $u$, ensures a
surjective mapping of virtual vertices to real nodes at the end of the
step.

When a node $u$ is deleted, on the other hand, the notified neighboring node $v$
also initiates random walks, except this time with the aim of redistributing the
deleted node $u$'s virtual vertices to the remaining real nodes in the system\onlyLong{(cf.\ Algorithm~\ref{app-algo:deletion})}.
We assume that every node $v$ has knowledge of $\Load_{G_{t-1}}(w)$, for each of
its neighbors $u$.
(This can be implemented with constant overhead, by simply updating neighboring
nodes when the respective $\Load_{G_{t-1}}$ changes.)
Since the deleted node $u$ might have simulated multiple vertices,  node
$v$ initiates a random walk for each $z \in \Load_{G_{t-1}}(u)$, to find
a node $w \in \Low_{G_{t-1}}$ to take over virtual vertex $z$.
In a nutshell, type-1 recovery consists of (re)balancing the load of virtual
vertices to real nodes by performing random walks. 
Rebalancing the load of a deleted node succeeds with high probability, as long
as $\theta n$ nodes are in $\Low_{G_{t-1}}$,  where the \emph{rebuilding parameter} $\theta$ is a fixed constant.
For our analysis, we require that
\onlyLong{\begin{equation} \label{eq:theta}}
\onlyShort{$}
  \theta \le 1/(68\zeta +1),
\onlyShort{$}
\onlyLong{\end{equation}}
where $\zeta \le 8$ is the maximum (constant) cloud size given by the $p$-cycle
construction.
Analogously, for insertion steps, finding a spare vertex will succeed w.h.p.\ if
$\High_{G_{t-1}}$ has size $\ge \theta n$.
If the size is below $\theta n$, we handle the insertion (resp.\ deletion) by
performing an inflation (resp.\ deflation) as explained below.
Thus we  formally define a step $t$ to be a \emph{type-1 step}, if either 
\onlyShort{(1) a node is inserted in $t$ and $|\High_{G_{t-1}}|\ge \theta n$ or 
(2) a node is deleted in $t$ and $|\Low_{G_{t-1}}|\ge \theta n$.}%
\onlyLong{%
\begin{compactenum}
\item[(1)] a node is inserted in $t$ and $|\High_{G_{t-1}}|\ge \theta n$ or 
\item[(2)] a node is deleted in $t$ and $|\Low_{G_{t-1}}|\ge \theta n$.
\end{compactenum}
}

If a random walk fails to find an appropriate node, we do not directly
start an inflation resp.\ deflation, but first {deterministically} count the
network size and sizes of $\High_{G_{t-1}}$ and $\Low_{G_{t-1}}$ by simple
aggregate flooding (cf.\ Procedures $\computeLow$ and $\computeHigh$).
We repeat the random walks, if it turns out that the respective set indeed
comprises $\ge \theta n$ nodes.
As we will see below, this allows us to deterministically guarantee constant node degrees.
\onlyLong{%
The following lemma shows an $O(\log n)$ bound for messages and rounds used by
random walks in type-1 recovery:}%
\onlyShort{We will only consider the case where a node is deleted and defer the (analogously handled) insertions to the full paper \cite{full}.}
\newcommand{\lemGillman}{
  Consider a step $t$ and suppose that
  $\Phi_{t-1}$ is a $4\zeta$-balanced virtual map.
There exists a constant $\ell$ such that the following hold w.h.p:
\onlyLong{%
\begin{compactitem}
\item[(a)] If $|\High_{G_{t-1}}|\ge \theta n$ and
  a new node $u$ is attached to some node $v$,
  then the random walk initiated by $v$ reaches a node in
  $\High_{G_{t-1}}$ in $\ell\log n$ rounds.
\item[(b)] If $|\Low_{G_{t-1}}|\ge \theta n$ and
  some node $u$ is deleted,
  then, for each of the (at most $4\zeta\in O(1)$) vertices simulated at 
  $u$, the initiated random walk reaches a node in $\High_{G_{t-1}}$ in 
  $\ell\log n$ rounds.
\end{compactitem}
}
\onlyShort{%
If $|\Low_{G_{t-1}}|\ge \theta n$ and
  some node $u$ is deleted,
  then, for each of the (at most $4\zeta\in O(1)$) vertices simulated at 
  $u$, the initiated random walk reaches a node in $\High_{G_{t-1}}$ in 
  $\ell\log n$ rounds.}
That is, w.h.p.\ type-1 recovery succeeds in $O(\log n)$ messages and rounds, and 
a constant number of edges are changed.
}
\begin{lemma} \label{lem:gillman}
  \lemGillman
\end{lemma}
\begin{proof}
  \onlyLong{We will first consider the case where a node is deleted (Case (b)).}
  The main idea of the proof is to instantiate a concentration bound for
  random walks on expander graphs \cite{G1998:Chernoff}. 
  By assumption, the mapping of virtual vertices to real nodes is
  $4\zeta$-balanced before the deletion occurs.
  Thus we only need to redistribute a constant number of
  virtual vertices when a node is deleted.

  We now present the detailed argument.
  By assumption we have that
  $|\Low| = a n \ge \theta n$, for a constant $0< a< 1$.
  We start a random walk of length $\ell\log
  n$ for some appropriately chosen constant $\ell$ (determined below).  
  We need to show that (w.h.p.) the walk hits a node in $\Low$.
According to the description of type-1 recovery for handling deletions, we perform
the random walk on the graph $G'_t$, which modifies $G_{t-1}\setminus\{u\}$, by
transferring all virtual vertices (and edges) of the deleted node $u$ to the
neighbor $v$.
Thus, for the second largest eigenvalue $\lambda=\lambda_{G'_{t}}$, we know
by Lemma~\ref{lem:fanChung} that $\lambda \le \lambda_{G_{t-1}}$.
Consider the normalized $n\times n$ adjacency matrix $M$ of $G'_{t}$.
It is well known (e.g., Theorem~7.13 in \cite{MU2005:Probability}) that a
vector $\pi$ corresponding to the stationary distribution of a random walk
on $G_{t-1}$ has entries $\pi(x) = \frac{d_x}{2|E(G'_t)|}$ where $d_x$ is
the degree of node $x$.
By assumption, the network $G_{t-1}$ is the image of a $4\zeta$-balanced 
virtual map. This means that the maximum degree $\Delta$ of any node in the 
network is $\Delta \le 12 \zeta$, and since the $p$-cycle is a $3$-regular
expander, every node has degree at least $3$.
If the adversary deletes some node in step $t$, the maximum degree of 
one of its neighbors can increase by at most $\Delta$.
Therefore, the maximum degree in $U_t$ and thus $G'_t$ is bounded by 
$2\Delta$, which gives us the bound
\begin{equation} \label{eq:pi1}
  \pi(x) \ge {3}/({2\Delta n}),
\end{equation}
for any node $x \in G'_t$.
  Let $\rho$ be the actual number of nodes in $\Low$ that the random walk of
  length $\ell\log n$ hits.
  We define $\mathbf{q}$ to be an 
  $n$-dimensional vector that is $0$ everywhere except at the index of $u$ in 
  $M$ where it is $1$.  
  Let $\mathcal{E}$ be the event that 
  $\ell\log n \cdot \pi(\Low) - \rho \ge \gamma,$ for
  a fixed $\gamma\ge 0$.
  That is, $\mathcal{E}$ occurs if the number of nodes in $\Low$ visited by
  the random walk is far away  ($\ge \gamma$) from its expectation.

  In the remainder of the proof, we show that $\mathcal{E}$ occurs
  with very small probability. 
  Applying the concentration bound of
  \cite{G1998:Chernoff} yields that
  \onlyShort{$}
  \onlyLong{  \begin{equation}
      \label{eq:gillman1}}
  \Pr\left[\mathcal{E}\right] \le
\left(1+\frac{\gamma(1-\lambda)}{10\ell\log n}\right)\cdot\left|\!\left|\frac{\mathbf{q}}{\sqrt{\pi}}\right|\!\right|_2 
\cdot e^{\frac{-\gamma^2(1-\lambda)}{20\ell\log n}},
\onlyLong{\end{equation}}
\onlyShort{$}
  where $\mathbf{q}/\sqrt{\pi}$ is a vector with entries
  $(\mathbf{q}/\sqrt{\pi})(x)=\mathbf{q}(x)/\sqrt{\pi(x)}$, for $1\le x\le n$.
  By \eqref{eq:pi1}, we know that $\pi(\Low) \ge  3 a / 2\Delta$.
  To guarantee that we find a node in $\Low$ w.h.p.\ even when $\pi(\Low)$
  is small, we must set $\gamma = \frac{ 3 a \ell}{2\Delta} \log n $.
  Moreover, 
  \eqref{eq:pi1} also gives us the bound $|\!|{\mathbf{q}/\sqrt{\pi}}|\!|_2 
  \le \sqrt{2\Delta/3}\sqrt{n}.$
  We define $C=\left(1 + \frac{ 3 a}{20\Delta}\right)\sqrt{2\Delta/3}.$
  Plugging these bounds into\onlyLong{ \eqref{eq:gillman1}}\onlyShort{ the bounds on $\Pr[\mathcal{E}]$}, shows that
$
    \Pr\left[\mathcal{E}\right] 
    \le
    C\sqrt{n} e^{\left(-\frac{(3 a  \ell / 2\Delta)^2 
          (1-\lambda)\log n}{20\ell}\right)} 
    = C n^{\left(\frac{1}{2}-\frac{9 a^2  \ell (1-\lambda) }{ 80\Delta^2}\right)}. 
$
  To ensure that event $\mathcal{E}$ happens with small probability, it is
  sufficient if the exponent of $n$ is smaller than $-C$,
  which is true for sufficiently large $\ell$.
  Since $\theta$, $\Delta$, and the spectral gap $1-\lambda$ are all
  $O(1)$, it follows that $\ell$ is a constant too and thus the running
  time of one random walk is $O(\log n)$ with high probability.
  Recall that node $v$ needs to perform a random walk for each of the
  virtual vertices that were previously simulated by the deleted node $u$;
  there are at most $4\zeta \in O(1)$ such vertices, since we
  assumed that $\Phi_{t-1}$ is $4\zeta$ balanced.
  Therefore, all random walks take $O(\log n)$ rounds in total (w.h.p.).

  \onlyLong{
  Now consider Case (a), i.e., the adversary inserted a new node $u$ and
  attached it to some existing node $v$.
  By assumption, $|\High| = a n \ge \theta n$, and the random walk is
  executed on the graph $G_{t-1}$ (excluding newly inserted node $u$).
  Thus \eqref{eq:pi1} and the remaining analysis hold analogously to Case (b),
  which shows that the walk reaches a node in $\High$ in $O(\log n)$ rounds (w.h.p.).
}

  Note that we only transfer a constant number of virtual vertices to a new nodes in type-1 recovery steps, i.e., the number of topology changes is constant.
\end{proof}
\onlyLong{
The following lemma summarizes the properties that hold after performing a
type-1 recovery: 
}
\newcommand{\lemAtMostSimple}{
  If type-1 recovery is performed in $t$ and $G_{t-1}$ is $4\zeta$-balanced,
  it holds that 
\onlyShort{\vspace{-0.05in}}
  \begin{compactitem}
    \item[(a)] $G_t$ is $4\zeta$-balanced,
    \item[(b)] step $t$ takes $O(\log n)$ (w.h.p.),
      rounds, 
    \item[(c)] nodes send $O(\log n)$ messages in step $t$ (w.h.p.),
      and
    \item[(d)] the number of topology changes in $t$ is constant.
  \end{compactitem}
\onlyShort{\vspace{-0.1in}}
}
\begin{lemma}[Worst Case Bounds Type-1 Rec.] \label{lem:atMostSimple}
  \lemAtMostSimple
\end{lemma}
\onlyLong{
\begin{proof}
  For (a), we first argue that the mapping $\Phi_t$ is surjective: This follows
  readily from the above description of type-1 recovery (see
  $\insertion(u,\theta)$ and $\deletion(u,\theta)$ for the full pseudo code):
  In the case of a newly inserted node, the algorithm repeatedly performs a
  random walk until it finds a node in $\High$ since $|\High|\ge \theta n$.
  If some node $u$ is deleted, then a neighbor initiates random walks to find a
  new host for each of $u$'s virtual vertices, until it succeeds.
  Thus, at the end of step $t$, every node simulates at least $1$ virtual
  vertex.
  To see that no node simulates more than $4\zeta$ vertices, observe that the
  load of a node can only increase due to a deletion.
  As we argued above, however, the neighbor $v$ that temporarily took over the
  virtual vertices of the deleted node $u$, will attempt to spread these vertices to nodes
  that are in $\Low$ and is guaranteed to eventually find such nodes by
  repeatedely performing random walks.

Properties (b), (c), and (d) follow from Lemma~\ref{lem:gillman}.
\end{proof}
}

\onlyShort{
\begin{algorithm}[h!]
  \small
\begin{algorithmic}[1]
\item[]{\textbf{Assumption:} the adversary attaches inserted node $u$ to
    arbitrary node $v$.}
\item[] 
\item[] \COMMENT{Try to perform a \textbf{type-1 recovery}:}
\STATE Node $v$ initiates a random walk of length $\ell\log n$ by
generating a token $\tau$ and sending it to a neighbor $u'$ chosen
uniformly at random, but excluding $u$.
Node $u'$ in turn forwards $\tau$ by chosing a neighbor at random and
so forth.
Note that the newly inserted node $u$ is excluded from being reached by
the random walk.
The walk terminates upon reaching a node $w \in \High$ (cf.\
Equation~\eqref{eq:spare}).
\label{line:sampleHighDegree}

\IF{found node $w \in \High$}
  \STATE Transfer a virtual vertex and all its edges (according to
  the virtual graph) from $w$ to $u$. Remove edge between $u$ and $v$ unless
  required by $\Z_t$.
\item[]
\ELSE[the walk did not hit a node in $\High$; perform \textbf{type-2 recovery} if necessary:]
  \STATE Determine current network size $n$ and $|\High|$ via
  $\computeHigh$ (cf.\ Algorithm~\ref{algo:computeHighLow}).
  \IF[Perform type-2 recovery:]{$|\High| < \theta n$}
  \onlyLong{\STATE Invoke $\inflate$ (cf.\ Algorithm~\ref{algo:inflate}).}
  \onlyShort{\STATE Invoke type-2 recovery.}
  \ELSE[Sufficiently many nodes with spare virtual vertices are present but
  the walk did not find them. Happens with probability $\le 1/n$.]
    \STATE Repeat from Line~\ref{line:sampleHighDegree}.
  \ENDIF 
\ENDIF
\end{algorithmic}
\caption{\texttt{insertion($u, \theta$)}}
\label{app-algo:insertion}
\end{algorithm}

\begin{algorithm}[h!]
  \small
\begin{algorithmic}[1]
\item[] \textbf{Assumption:} adversary deletes an arbitrary node $u$ which
  simulated $k$ virtual vertices. (We prove that $k \in O(1)$).
\STATE A (former) neighbor $v$ of node $u$ attaches all edges of $u$ to itself.

\item[] 
\item[] \COMMENT{Try to perform a \textbf{type-1 recovery}:}
\FOR{each of the $k$ vertices} 
  \STATE Node $v$ initiates a random walk of length $\ell\log n$ by
  generating a token $\tau$ and sending it to a uniformly at random chosen
  neighbor $u'$.
  Node $u'$ in turn forwards $\tau$ by chosing a neighbor at random and
  so forth.
  The walk terminates upon reaching a node $w \in \Low$ (cf.\
  Equation~\eqref{eq:low}).
\ENDFOR
\label{line:ifSampleLowDegree}
\IF{all random walks found nodes $w_1,\dots,w_k \in \Low$:}
  \STATE Distribute the virtual vertices of $u$ and their respective edges (according to
  the virtual graph) from $v$ to $w_1,\dots,w_k$. 
\item[]
\ELSE[Some of the random walks did not find a node in $\Low$; perform
\textbf{type-2 recovery} if necessary:]
  \STATE Determine network size $n$ and $|\Low|$ via $\computeLow$ (cf.\
  Algorithm~\ref{algo:computeHighLow}).
  \IF[Perform type-2 recovery:]{$|\Low| < \theta n$}
    \onlyLong{\STATE Invoke $\deflate$ (cf.\ Algorithm~\ref{algo:deflate}).}
    \onlyShort{\STATE Invoke type-2 recovery.}
  \ELSE[Sufficiently many nodes with low load are present but the walk(s) did not find
  them. This happens with probability $\le 1/n$:]
    \STATE  Repeat from Line~\ref{line:ifSampleLowDegree}.
  \ENDIF 
\ENDIF
\end{algorithmic}
\caption{Procedure \texttt{deletion($u,\theta$)}}
\label{app-algo:deletion}
\end{algorithm}

\begin{algorithm}[h!]
  \small
\begin{algorithmic}[1]
\item[] \textbf{Given:} $\textsc{diam}$ is the diameter of  $\Z_t$ (i.e. $\textsc{diam} \in O(\log n)$).
\STATE Node $u$ broadcasts an aggregation request to all its neighbors. In
addition to the network size, this request indicates whether to compute
$|\Low|$ or $|\High|$. That is, the request of $u$ traverses the network in
a BFS-like manner and then returns the aggregated values to $u$.
\STATE If a node $w$ receives this request from some neighbor, it computes
the aggregated maximum value, according to whether $w \in \High$ for $\computeHigh$ (resp.\
$w\in\Low$ for $\computeLow$).
\STATE If node $w$ has received the request for the first time, $w$
forwards it to all neighbors (except $v$).
\STATE Once the entire network has been explored this way, i.e., the
request has been forwarded for $\textsc{diam}$ rounds, the aggregated
maximum values of the network size and $|\Low|$ (resp.\ $|\High|$) are
sent back to $u$, which receives them after $\le 2\textsc{diam}$
rounds.  
\end{algorithmic}
\caption{$\computeHigh$ and $\computeLow$.
}
\label{algo:computeHighLow}
\end{algorithm}
}

\onlyShort{\vspace{-0.1in}}
\subsection{Type-2 Recovery: Inflating and Deflating}  \label{sec:infldefl}
\onlyLong{%
We now describe an implementation of type-2 recovery that
yields amortized polylogarithmic bounds on messages and time.
We later extend these ideas\onlyShort{ below}\onlyLong{ (cf.\ Sec.~\ref{sec:improved})} to give $O(\log n)$
worst case bounds.}
Recall that we perform type-1 recovery in step $t$, as long as $\theta n$ nodes are in
$\High_{G_{t-1}}$ when a node is inserted, resp.\ in $\Low_{G_{t-1}}$,
upon a  deletion.



\onlyLong{
\begin{fact} \label{lem:inflateDeflate}
  If the algorithm performs type-2 recovery in $t$, the following holds:
  \begin{compactitem}
  \item[(a)] If a node is inserted in $t$, then $|\High_{G_{t-1}}|  < 
    \theta n$.
  \item[(b)] If a node is deleted in $t$, then $|\Low_{G_{t-1}}| < \theta 
    n$.
  \end{compactitem}
\end{fact}
}

\onlyLong{\subsubsection{Inflating the Virtual Graph}}
\onlyShort{\noindent\textbf{Inflating the Virtual Graph:}}
If node $v$ fails to find a spare node for a newly inserted neighbor and 
computes that $|\High_{G_{t-1}}|<\theta n$, i.e., only few nodes simulate
multiple virtual vertices each, it 
invokes Procedure $\inflate$\onlyShort{ (cf.\ Algorithm~\ref{algo:inflate}}\onlyLong{ (cf.\ Algorithm~\ref{algo:inflate} for the detailed pseudo code)}, which consists of two phases:

\onlyLong{\paragraph{Phase 1: Constructing a Larger $p$-Cycle}}
\onlyShort{\paragraph{Phase 1: Constructing a Larger $p$-Cycle}}
Node $v$ initiates replacing the current $p$-cycle $\Z_{t-1}(p_i)$
with the larger $p$-cycle $\Z_{t}(p_{i+1})$, for some prime number
$p_{i+1} \in (4p_i,8p_i)$.
This rebuilding request
is forwarded throughout the entire network to ensure that after this
step, every node uses the exact same new $p$-cycle $\Z_t$.
Intuitively speaking, each virtual vertex of $\Z_{t-1}$ is replaced by a
cloud of (at most $\zeta \le 8$) virtual vertices of $\Z_{t}$ and all
edges are updated such that $G_t$ is a virtual mapping of
$\Z_t$.

For simplicity, we use $x$ to denote both: an integer $x \in \mathbb{Z}_p$
and also the associated vertex in $V(\Z_t(p))$.
At the beginning of step $t$, all nodes are in agreement on the current virtual
graph $\Z_{t-1}(p_i)$, in particular, every node knows the prime number $p_i$.
To get a larger $p$-cycle, all nodes deterministically compute the (same)
smallest prime number $p_{i+1}  \in (4p_i,8p_i)$, i.e., $V(\Z_t(p_{i+1})) = 
\mathbb{Z}_{p_{i+1}}$.
(Local computation happens instantaneously and does not incur any
  cost (cf.\ Sec.\ \ref{sec: model}).)
Bertrand's postulate~\cite{Bertrand1850} states that for every $n>1$,
there is a prime between $n$ and $2n$, which ensures that $p_{i+1}$
exists.
Every node $u$ needs to determine the new set of vertices in
$\Z_{t}(p_{i+1})$ that it is going to simulate:
Let $\alpha = \frac{p_{i+1}}{p_i} \in O(1)$.
For every currently simulated vertex $x\in \Sim_{G_{t-1}}(u)$, node $u$ computes the constant 
\onlyLong{\begin{equation}\label{eq:cxDef}}
  \onlyShort{$}c(x) = \lfloor \alpha(x+1)\rfloor - \lfloor \alpha x\rfloor - 1,
  \onlyShort{$}
  \onlyLong{\end{equation}}
and replaces $x$ with the new virtual vertices $y_0,\dots,y_{c(x)}$ where
\onlyLong{\begin{equation} \label{eq:cx}}
\onlyShort{$}
  \text{$y_j = (\lfloor \alpha x \rfloor + j) \mod p_{i+1}$, for $0 \le j \le
c(x)$.}
\onlyShort{$}%
\onlyLong{\end{equation}}
Note that the vertices $y_0,\dots,y_{c(x)}$ form a cloud (cf.\ Sec.~\ref{subsec:
virtualexpander}) where the maximum cloud size is $\zeta\le 8$.
This ensures that the new virtual vertex set is a bijective mapping of
$\mathbb{Z}_{p_{i+1}}$.

Next, we describe how we find the edges of $\Z_t(p_{i+1})$:
First, we add new cycle edges (i.e. edges between $x$ and $x+1\mod
p_{i+1}$), which can be done in constant time by
using the cycle edges of the previous virtual graph $\Z_{t-1}(p_i)$.
For every $x$ that $u$ simulates, we need to add an edge to the 
node that simulates vertex $x^{-1}$.
Since this needs to be done by the respective simulating node of every
virtual vertex, this corresponds to solving a permutation routing instance.
Corollary 7.7.3 of \cite{S1998:Universal-abbrv}\onlyLong{ (cf.\ 
  Corollary~\ref{cor:routing})} states that, for any bounded
degree expander with $n$ nodes, $n$ packets, one per node, can be
routed (even online) according to an arbitrary permutation in
$O(\frac{\log n (\log\log n)^2}{\log\log\log n})$ rounds w.h.p.
Note that every node in the network knows the exact topology of the
current virtual graph (but not necessarily of the network graph $G_t$), 
and can
hence calculate all routing paths in this graph, which map to paths in the
actual network (cf.\ Fact~\ref{fact:distances}).
Since every node simulates a constant number of vertices, we can find the route
to the respective inverse by solving a constant number of permutation routing
instances.
\onlyLong{%
  The following lemma follows from the previous discussion\onlyShort{ (see the full paper \cite{full} for a
more detailed proof)}:

\begin{lemma} \label{lem:larger}
Consider a $t\ge 1$ where some node performs type-2 recovery via $\inflate$.
If the network graph $G_{t-1}$ is a $C$-balanced image of $\Z_{t-1}(p_i)$,
then Phase~1 of $\inflate$ ensures that every node computes the same virtual
graph in $O(\log n (\log\log n)^2)$ rounds such that the following hold:
\onlyShort{%
(a) $p_{i+1} = |\Z_{t}(p_{i+1})| \in (4p_i,8p_i)$, the 
  network graph is $(C\zeta)$-balanced, and the maximum clouds size is 
  $\zeta \le 8$.
(b) There is a bijective map between
  $\mathbb{Z}_{p_{i+1}}$ and $V(\Z_{t}(p_{i+1}))$.
(c) The edges of $\Z_{t}(p_{i+1})$ adhere to
  Definition~\ref{def:prime}.}
\onlyLong{%
\begin{compactenum}
\item[(a)] $p_{i+1} = |\Z_{t}(p_{i+1})| \in (4p_i,8p_i)$, the 
  network graph is $(C\zeta)$-balanced, and the maximum clouds size is 
  $\zeta \le 8$.
\item[(b)] There is a bijective map between
  $\mathbb{Z}_{p_{i+1}}$ and $V(\Z_{t}(p_{i+1}))$.
\item[(c)] The edges of $\Z_{t}(p_{i+1})$ adhere to
  Definition~\ref{def:prime}. 
\end{compactenum}
}
\end{lemma}
\onlyLong{
\begin{proof}
Property~(a) follows from the previous discussion.
For Property~(b), we first show set equivalence.
Consider any $z \in \mathbb{Z}_{p_{i+1}}$ and assume in contradiction that
$z \notin V(\Z_t(p_{i+1}))$.
Let $\alpha = {p_{i+1}} / {p_i}$ and let $x$ be the greatest integer
such that $z = \lfloor \alpha x \rfloor + k$, for some integer $k \ge 0$.
If $k \ge \alpha$, then 
$$z = \lfloor \alpha x + k \rfloor \ge \lfloor \alpha x + \alpha \rfloor = \lfloor \alpha(x+1) \rfloor,$$
which contradicts the maximality of $x$,
therefore, we have that  $k <  \alpha $.
It cannot be that $x < p_i$, since otherwise $z \in V(Z(p_{i+1}))$
according to \eqref{eq:cx}, which shows that $x \ge p_i$.
This means that 
\[
  z = \lfloor \alpha x \rfloor + k \ge \lfloor \alpha p_i \rfloor + k = \lfloor p_{i+1} \rfloor + k \ge p_{i+1},
\]
which contradicts $z \in \mathbb{Z}_{p_{i+1}}$, thus we have shown 
$\mathbb{Z}_{p_{i+1}} \subseteq V(\Z_t(p_{i+1}))$.
The opposite relation, i.e.\ $V(\Z_t(p_{i+1})) \subseteq \mathbb{Z}_{p_{i+1}}$,
is immediate since the values associated to vertices of $\Z_t(p_{i+1})$ are
computed modulo $p_{i+1}$.

To complete the proof of (b), we need to show that no two distinct vertices in
$V(\Z_t(p_{i+1}))$ correspond to the same value in
$\mathbb{Z}_{p_{i+1}}$, i.e., $V(\Z_t(p_{i+1}))$ is not a multi-set.
Suppose, for the sake of a contradiction, that there are $y = (\lfloor
\alpha x \rfloor + k) \mod\ p_{i+1}$ and $y' = (\lfloor \alpha x' \rfloor
+ k') \mod\ p_{i+1}$ with $y = y'$. 
By \eqref{eq:cx}, we know that $k' \le c(x)$,
hence to bound $k'$ it is sufficient to show that $c(x) <  \alpha$:
By \eqref{eq:cxDef}, we have that
$$
c(x) = \lfloor \alpha x+\alpha  - (\lfloor \alpha x \rfloor + 1)\rfloor 
 < \lfloor \alpha x + \alpha - \alpha x \rfloor \le \alpha.
$$
Note that the same argument shows that $k \le \alpha$.
Thus it cannot be that $y' = \lfloor \alpha x \rfloor + k + m p_{i+1}$,
for some integer $m\ge 1$.
This means that $x \ne x'$; wlog assume that $x  > x'$.
As we have shown above, $k' \le c(x) < \alpha$, which implies that
\[ 
  y' = \lfloor \alpha x' \rfloor + k' < \lfloor \alpha(x'+1) \rfloor \le \lfloor \alpha x \rfloor \le y,
\]
yielding a contradiction to $y=y'$.

For property (c), observe that all new cycle edges (i.e., of the form
$(x,x\pm 1))$ of $\Z_t(p_{i+1})$ are between nodes that were already simulating
neighboring vertices of $\Z_{t-1}(p_{i})$, thus every node $u$ can add these 
edges
in constant time.
Finally, we argue that every node can efficiently find the inverse vertex for
its newly simulated vertices:
Corollary 7.7.3 of \cite{S1998:Universal-abbrv} states that for any bounded
degree expander with $n$ nodes, $n$ packets, one per processor, can be
routed (online) according to an arbitrary permutation in $T=O(\frac{\log n
(\log\log n)^2}{\log\log\log n})$ rounds w.h.p.
Note that every node in the network knows the exact topology of the current
virtual graph (nodes do {not} necessarily know the {network} graph $G_t$!), and
can hence calculate all routing paths, which map to paths in the actual network
(cf.\ Fact~\ref{fact:distances}).
Since every node simulates a constant number of vertices, we can find the route
to the respective inverse by performing a constant number of iterations of
permutation routing, each of which takes $T$ rounds.
\end{proof}
}
}

\onlyLong{\paragraph{Phase 2: Rebalancing the Load}}
\onlyShort{\paragraph{Phase 2: Rebalancing the Load}}
Once the new virtual graph $\Z_t(p_{i+1})$ is in place, each real node
simulates a greater number (by a factor of at most $\zeta$) of virtual
vertices and now a random walk is guaranteed to find a spare virtual vertex on the
first attempt with high probability, according to
Lemma~\ref{lem:gillman}.(a).
At the beginning of the step, the virtual mapping $\Phi_{t-1}$ was
$4\zeta$-balanced.
This, however, is not necessarily the case after Phase~1, i.e., replacing
$\Z_{t-1}$ by $\Z_t$.
A node could have been simulating
$4\zeta$ virtual vertices \emph{before} $\inflate$ was invoked and now
might be simulating $4\zeta^2$ vertices of $\Z_t(p_{i+1})$.
In fact, this can be the case for a $\theta$-fraction of the nodes.
To ensure a $4\zeta$-balanced mapping at the end of step $t$, we thus need
to rebalance  these additional vertices among the other (real) nodes.
Note that this is always possible, since $(1-\theta)n$ 
nodes had a load of $1$ before invoking $\inflate$ and simulate
only $\zeta$ virtual vertices each at the end of Phase~1.
A node $v$ that has a load of $k'>4\zeta$ vertices of $\Z_t(p_{i+1})$,
proceeds as follows, for each vertex $z$ of the (at most constant) vertices that
it needs to redistribute:
Node $v$ marks all of its vertices as \emph{full} and initiates a random
walk of length $\Theta(\log n)$ on the virtual graph $\Z_{t}(p_{i+1})$,
which is simulated on the actual network.
If the walk ends at a vertex $z'$ simulated at some node $w$ that is not
marked as \emph{full}, and no other random walk simultaneously ended up at
$z'$, then $v$ transfers $z$ to $w$.
This ensures that $z$ is now simulated at a node that had a load of
$<4\zeta$.
A node $w$ immediately marks all of its vertices as \emph{full}, once its load
reaches $2\zeta$.
Node $v$ repeatedly performs random walks until all of the $k'-4\zeta$
vertices are transfered to other nodes.


\newcommand{\lemAtMostInflate}{
  Suppose that $G_{t-1}$ is $4\zeta$-balanced and type-2 recovery is performed
  in $t$ via $\inflate$ or $\deflate$.
  The following holds:
  \onlyShort{%
  (a) $G_t$ is $4\zeta$-balanced.
  (b) With high probability, step $t$ completes in 
    $O(\log^3 n )$.
  (c) With high probability, nodes send $O(n\log^2 n)$ messages.
  (d) The number of topology changes is $O(n)$.  
  }
  \onlyLong{%
  \begin{compactitem}
  \item[(a)] $G_t$ is $4\zeta$-balanced.
  \item[(b)] With high probability, step $t$ completes in 
    $O(\log^3 n )$.
  \item[(c)] With high probability, nodes send $O(n\log^2 n)$ messages.
  \item[(d)] The number of topology changes is $O(n)$.
  \end{compactitem}
  }
}
\onlyLong{
\begin{lemma}[Simplified Type-2 recovery] \label{lem:atMostInflate}
  \lemAtMostInflate
\end{lemma}
\begin{proof}
  Here we will show the result for $\inflate$.
  In Sec.~\ref{sec:deflate}, we will argue the same properties for 
  $\deflate$ (described below).

  Property (d) follows readily from the description of Phase~1.
  For (a), we observe that, in Phase~1, $\inflate$ replaces each virtual
  vertex with a cloud of virtual vertices.
  Moreover, nodes only redistribute vertices such that their load does not
  exceed $4\zeta$.
  It follows that every node simulates at least one vertex, thus
  $\Phi_{t}$ is surjective.
  What remains to be shown is that every node has
  a load $\le 4\zeta$ at the end of $t$.

  Consider any node $u$ that has $\Load(v)\in(2\zeta,4\zeta)$ after
  Phase~1.
  To see that $u$'s load does not exceed $4\zeta$, recall that, according 
  the description of Phase~2, $u$ will mark all its vertices as 
  \emph{full} and henceforth will not accept any new vertices.
  By Fact~\ref{lem:inflateDeflate}.(a), at most $\theta n$ nodes have a 
  load $>1$ in $U_{t}$.
  Let $Balls_0$ be the set of vertices that need to be redistributed.
  Lemma~\ref{lem:larger}.(a) tells us that the every vertex in
  $\Z_{t-1}(p_i)$ is replaced by (at most) $\zeta$ new vertices in 
  $\Z_t(p_{i+1})$,
  which means that $|Balls_0| \le 4\theta(\zeta^2 - \zeta)n,$ since every 
  such high-load node continues to simulate $4\zeta$ vertices by itself.
  
  To ensure that this redistribution can be done in polylogarithmic time,
  we need to lower bound the total number of available places (i.e.\ the 
  bins) for these virtual vertices (i.e.\ the balls).
  By Fact~\ref{lem:inflateDeflate}.(a), we know that $\ge (1-\theta)n$ 
  nodes have a load of at most $\zeta$ after Phase~1.
  These nodes do not mark their vertices as \emph{full}, and thus accept 
  to simulate additional vertices until their respective load reaches 
  $2\zeta$.
  Let $Bins$ be the set of virtual vertices that are not marked as 
  \emph{full}; It holds that $|Bins| \ge (1-\theta)\zeta n.$

  We first show that with high probability, a constant fraction of random
  walks end up at vertices in $|Bins|$.
  Since $\Z_t(p_{i+1})$ is a regular expander, the distribution of the random 
  walk converges to the uniform distribution (e.g., \cite{MU2005:Probability}) 
  within $O(\log \sigma)$ random steps where $\sigma = |Z^{i+1}| \in 
  \Theta(n)$.
  More specifically, the distance (measured in the maximum norm) to the 
  uniform distribution, represented by a vector 
  $(1/\sigma,\dots,1/\sigma)$, can be bounded by
  $\frac{1}{100\sigma}$.
  Therefore, the probability for a random walk token to end up at a 
  specific vertex is within $[\frac{99}{100\sigma},\frac{101}{100\sigma}]$.
  Recall that, after Phase~1 all nodes have computed the same graph 
  $\Z_t(p_{i+1})$ and thus use the same value $\sigma$.

  We divide the random walks into \emph{epochs} where an epoch is the 
  smallest interval of rounds containing $c\log n$ random walks.
  We denote the number of vertices that still need to be redistributed
  at the beginning of epoch $i$ as $Balls_i$. 
  \onlyShort{
    We first bound the number of rounds taken by an epoch (cf.\ full paper \cite{full} for the
  proof):}
  \begin{claim} \label{claim:balls}
  Consider a fixed constant $c$.
  If $|Balls_i| \ge c\log n$, then epoch $i$ takes $O(\log^2 n)$
  rounds, w.h.p.
  Otherwise, if $|Balls_j| < c\log n$, then $j$ comprises 
  $O(\log^3 n)$ rounds w.h.p.
  \end{claim}

  \onlyLong{
  \begin{proof}
  We will now show that an epoch lasts at most $O(\log^3 n)$ rounds with
  high probability.
  First, suppose that $|Balls_i| \ge c\log n$.
  By Lemma~\ref{lem:randomWalk}, we know that even a linear number of
  parallel walks (each of length $\Theta(\log n)$) will complete within
  $O(\log^2 n)$ rounds w.h.p.
  Therefore, epoch $i$ consists of $O(\log^2 n)$ rounds, since
  $\Omega(\log n)$ random walks are performed in parallel.
  In the case where $|Balls_j| <c\log n$, it is possible that an epoch
  consists of random walks that are mostly performed
  sequentially by the same nodes. 
  Thus we add a $\log n$ factor to ensure that epoch $j$ consists of 
  $c\log n$ walks.
  By Lemma~\ref{lem:randomWalk} we get a bound of $O(\log^3 n)$ rounds.
  \end{proof}
  } 
  Next, we will argue that
  after $O(\log n)$ epochs, we have $|Balls_j| <c\log n$.
  Thus consider any epoch $i$ where $|Balls_i| \ge c\log n$.
  We bound the probability of the indicator
  random variable $Y_k$ that is $1$ iff the walk associated with the
  $k$-th vertex ends up at a vertex that was already marked \emph{full} when the
  walk was initiated.
  (In particular, $Y_k=0$ if the $k$-th walk ends up at $z$ and $z$ became
  \emph{full} in the current iteration but was not marked \emph{full} before.)
  Note that all $Y_k$ are independent.
While the number of available bins (i.e.\ non-full vertices) will decrease 
over time, we know from \eqref{eq:theta} that $|Bins| -  |Balls_0| >
  \frac{9}{10}|Bins|$; thus, at any epoch,  we can use the bound
    $|Bins| \ge ({9}/{10})(1-\theta)\zeta n.$
  This shows that
    $\Prob{Y_k=1} \le \frac{101}{100\sigma}\left(\sigma - |Bins|\right)
           \le \frac{101}{100}\left(1-\frac{9(1-\theta)\zeta 
              n}{10\sigma}\right).$
  From $\sigma \le \zeta(1-\theta)n + 4\zeta^2\theta n$
  and the fact that \eqref{eq:theta} implies  $ 1-\frac{9 (1-\theta) 
    \zeta}{10 ((1-\theta) \zeta+4 \zeta^2 \theta)}<3/20,
  $ we get that
  $\Prob{Y_k=1} \le ({101}/{100}) \cdot ({3}/{20})$.
  Let $Y = \sum_{k\in Balls_i} Y_k$.  Since $|Balls_i|=\Omega(\log n)$ , 
  we can use
  a Chernoff bound (e.g.\ \cite{MU2005:Probability}) to show that
  $
\Pr\left[Y \ge ({909}/{1000})|Balls_i| \ge 6 \E[Y]\right] \le 
2^{-\frac{909}{1000}|Balls_i|},
  $
  thus with high probability (in $n$), a constant fraction of the
  random walks in epoch $i$ will end up at non-full vertices.
  We call these walks \emph{good balls} and denote this set as $Good_i$.

  We will now show that a constant fraction of good balls do not end up at the 
  same bin with high probability,
  i.e., we are able to successfully redistribute the associated vertices in 
  this epoch.
  Let $X_k$ be the indicator random variable that is $1$ iff the $k$-th
  ball is eliminated. 
  We have
  $\Pr[X_k=1] \ge (1-\frac{101}{100|Bins|})^{|Good_i|-1} \ge e^{-\Theta(1)}$,
  i.e., at least a constant fraction of the balls in $Good_i$ are
  eliminated on expectation.

  Let $W$ denote the number of eliminated vertices in epoch $i$, which is 
  a function $f(B_1,\dots,B_{|Good_i|})$ where $B_j$ denotes the bin 
  chosen by the $j$-th ball.
  Observe that changing the bin of some ball can affect the elimination of
  at most one other ball.
  In other words, $W$ satisfies the Lipschitz condition and we
  can apply the method of bounded differences.
  By the Azuma-Hoeffding Inequality (cf.\ Theorem~12.6 in
  \cite{MU2005:Probability}), we get a sharp concentration bound for $W$, 
  i.e., with high probability, a constant fraction of the balls are 
  eliminated in
  every epoch.

  We have therefore shown that after $O(\log n)$ epochs, we are left with
  less than
  $c\log n$ vertices that need to be redistributed, w.h.p.
  Let $j$ be the first epoch when $|Balls_j| <c\log n$.
  Note that epoch $j$ consists of $\Omega(\log n)$ random walks where some 
  nodes perform multiple random walks.
  By the same argument as above, we can show that with high probability, a
  constant fraction of these walks will end up at some
  non-full vertices without conflicting with another walk and are thus
  eliminated. Since we only need $c\log n$ walks to succeed, this
  ensures that the entire set $Balls_j$ is redistributed
  w.h.p.\ by the end of epoch $j$, which shows (a).

  By Claim~\ref{claim:balls}, the first $O(\log n)$ epochs can each last
  $O(\log^2 n)$ rounds, while only epoch $j$ takes $O(\log^3 n)$ rounds.
  Altogether, this gives a running time bound of $O(\log^3 n)$, as
  required for (b).
  For Property (c), note that the flooding of the inflation request to all
  nodes in the network requires $O(n)$ messages.
  This, however, is dominated by the time it takes to redistribute the 
  load: each epoch might use $O(n\log n)$ messages.
  Since we are done w.h.p.\ in $O(\log n)$ epochs, we get a total message
  complexity of $O(n \log^2 n)$.
  \onlyLong{
    For (d), observe that the sizes of the virtual expanders $\Z_{t-1}(p_i)$ 
    and $\Z_t(p_{i+1})$ are both in $O(n)$. Due to their constant degrees, at 
    most
  $O(n)$ edges are affected by replacing the edges of $\Z_{t-1}(p_i)$ with 
  the ones of $\Z_t(p_{i+1}$, yielding a total of $O(n)$ topology changes 
  for)
  $\inflate$.
} \qed
\end{proof}
}

%

\onlyLong{\subsubsection{Deflating the Virtual Graph} \label{sec:deflate}}
\onlyShort{\noindent\textbf{Deflating the Virtual Graph}:}
When the load of all but $\theta n$ nodes exceeds $2\zeta$ and some node $u$ is
deleted, the high probability bound of Lemma~\ref{lem:gillman} for the random
walk invoked by neighbor $v$ no longer applies. 
In that case, node $v$ invokes Procedure $\deflate$ to reduce the overall load\onlyShort{ (cf.\ full paper \cite{full}.}\onlyLong{ (cf.\ Algorithm~\ref{algo:deflate}).}
Analogously as $\inflate$, Procedure $\deflate$ consists of two phases:

\onlyLong{\paragraph{Phase 1: Constructing a Smaller $p$-Cycle}}
\onlyShort{\paragraph{Phase 1: Constructing a Smaller $p$-Cycle}}
To reduce the load of simulated vertices, we replace the current
$p$-cycle $Z_{t-1}(p_i)$ with a smaller $p$-cycle $\Z_t(p_s)$ where $p_s$ 
is a
prime number in the range $(p_i/8,p_i/4)$.

Let $\alpha = {p_i}/{p_s}$.
Any virtual vertex $x \in \Z_{t-1}(p_i)$, is (surjectively) mapped to some $y_x
\in \Z_t(p_{s})$ where $y = \lfloor{x}/{\alpha} \rfloor$.
Note that we only add $y$ to $V(\Z_t(p_s))$ if there is no smaller $x' \in
\Z_{t-1}(p_i)$ that yields the same $y$.
This mapping guarantees that, for any element in $\mathbb{Z}_{p_s}$, we have
exactly $1$ virtual vertex in $\Z_{t}(p_s)$:
Suppose that there is some $y \in \mathbb{Z}_{p_s}$ that is not hit by our mapping,
i.e., for all $x \in \mathbb{Z}_{p_i}$, we have $y > \lfloor
\frac{x}{\alpha} \rfloor$.
Let $x'$ be the smallest integer  such that $y = \lfloor
\frac{x'}{\alpha} \rfloor$.
For such an $x'$, it must hold that $\alpha y \le x' < \alpha(y+1)$.
Since $\alpha>1$, clearly $x'$ exists.
By assumption, we have $x' \ge p_i$, which yields
$  \left\lfloor {p_i}/{\alpha} \right\rfloor \le \left\lfloor
{x'}/{\alpha} \right\rfloor = y < p_s.$
Since $p_s = {p_i}/\alpha$, we get
$ \left\lfloor p_s\right \rfloor < p_s,$
which is a contradiction to $p_s \in \mathbb{N}$.
Therefore, we have shown that $\mathbb{Z}_{s}\subseteq V(\Z_t(p_s))$.
The opposite set inclusion can be shown similarly.

\onlyLong{
For computing the edges of $\Z_t(p_s)$, note that any cycle edge $(y,y\pm 1) \in
E(\Z_t(p_s))$, is between nodes $u$ and $v$ that were at most $\alpha$ hops
apart in $G_t$,
since their distance is at most $\alpha$ in the virtual graph $\Z_{t-1}(p_i)$.
Thus any such edge can be added by exploring a neighborhood of constant-size in
$O(1)$ rounds via the cycle edges (of the current virtual graph) $\Z_{t-1}(p_i)$
in $G_t$.
To add the edge between $y$ and its inverse $y^{-1}$, we
proceed along the lines of Phase~1 of $\inflate$, i.e., we solve
permutation routing on $\Z_{t-1}(p_i)$, taking $O(\frac{\log n (\log\log
n)^2}{\log\log\log n})$
rounds.
}
\onlyShort{
To add cycle edges and the edge between $y$ and $y^{-1}$, we
proceed as in Phase~1 of $\inflate$, i.e., we solve
permutation routing on $\Z_{t-1}(p_i)$, taking $O(\frac{\log n (\log\log
n)^2}{\log\log\log n})$
rounds.
}
\onlyLong{%
The following lemma summarizes the properties of Phase~1:
\begin{lemma} \label{lem:smaller}
If the network graph $G_{t-1}$ is a balanced map of $\Z_{t-1}(p_i)$, then
Phase~1 of $\deflate$ ensures that every node computes the same virtual
graph $\Z_t(p_s)$ in $O(\log n (\log\log n)^2)$ rounds such that 
\begin{compactenum}
\item[(a)] $p_s = |\Z_t(p_s)| \in (p_i/8,p_i/4)$, for some prime $p_s$;
\item[(b)] there is a one-to-one mapping between
  $\mathbb{Z}_{p_s}$ and $V(\Z_t(p_s))$;
\item[(c)] the edges of $\Z_t(p_s)$ adhere to Definition~\ref{def:prime}. 
\end{compactenum}
\end{lemma}
\begin{proof}
Property (a) trivially holds.
For (b), observe that by description Phase~1, we map $x \in
\Z_{t-1}(p_i)$ surjectively to
$y_x \in \Z_t(p_{s})$ using the mapping $y_x = \lfloor\frac{x}{\alpha}
\rfloor$ where $\alpha = \frac{p_i}{p_s}$.
Note that we only add $y_x$ to $V(\Z_t(p_s))$ if there is no smaller $x \in 
\Z_{t-1}(p_i)$ that yields the same value in $\mathbb{Z}_{p_s}$, which guarantees that
$V(\Z_t(p_s))$ is not a multiset.
Suppose that there is some $y \in \mathbb{Z}_{p_s}$ that is not hit by our mapping,
i.e., for all $x \in \mathbb{Z}_{p_i}$, we have $y > \lfloor
\frac{x}{\alpha} \rfloor$.
Let $x'$ be the smallest integer  such that $y = \lfloor
\frac{x'}{\alpha} \rfloor$.
For such an $x'$, it must hold that $\alpha y \le x' < \alpha(y+1)$.
Since $\alpha>1$, clearly $x'$ exists.
By assumption we have $x' \ge p_i$, which yields
\[
  \left\lfloor \frac{p_i}{\alpha} \right\rfloor \le \left\lfloor \frac{x'}{\alpha} \right\rfloor < p_s.
\]
Since $\alpha = \frac{p_i}{p_s}$, we get
\[
 \left\lfloor p_s\right \rfloor = \left\lfloor \frac{p_i}{\alpha} \right\rfloor < p_s,
\]
which is a contradiction to $p_s \in \mathbb{N}$.
Therefore, we have shown that $\mathbb{Z}_{s}\subseteq V(\Z_t(p_s))$.
To see that $V(\Z_t(p_s))\subseteq \mathbb{Z}_{s}$, suppose that we add a 
vertex
$y \ge p_s$ to $V(\Z_t(p_s))$.
By the description of Phase~1, this means that there is an $x \in 
V(\Z_{t-1}(p_{i}))$,
i.e., $x \le p_i - 1$, such that $y = \lfloor \frac{x}{\alpha} \rfloor$.
Substituting for $\alpha$ yields a contradiction to $y\ge p_s$, since
\[
y = \left\lfloor \frac{x}{\alpha} \right\rfloor \le \left\lfloor\frac{p_{i}-1}{\alpha} \right\rfloor = \left\lfloor p_s - \frac{p_s}{p_i}\right\rfloor < p_s.
\]

For property (c), note that any cycle edge $(y,y\pm 1) \in E(\Z_t(p_s))$, is
between nodes $u$ and $v$ that were at most $\alpha$ hops apart in $G_t$,
since their distance can be at most $\alpha$ in $\Z_{t-1}(p_i)$. Thus any such 
edge
can be added by exploring a neighborhood of constant-size in $O(1)$ rounds
via the cycle edges of $\Z_{t-1}(p_i)$ in $G_t$.
To add an edge between $y$ and its inverse $y^{-1}$, we proceed along the
lines of the proof of Lemma~\ref{lem:larger}, i.e., we solve permutation
routing on $\Z_{t-1}(p_i)$, taking $O(\frac{\log n (\log\log 
  n)^2}{\log\log\log n})$ rounds.
\end{proof}
}

\onlyLong{\paragraph{Phase 2: Ensuring a Virtual Mapping}}
\onlyShort{\paragraph{Phase 2: Ensuring a Virtual Mapping}}
After Phase~1 is complete, the replacement of multiple virtual vertices in
$\Z_{t-1}(p_i)$ by a single vertex in $\Z_t(p_s)$, might lead to the case where
some nodes are no longer simulating \emph{any} virtual vertices.
A node that currently does not simulate a vertex, marks itself as
\emph{contending} and repeatedly keeps initiating random walks on $\Z_t(p_s)$
(that are simulated on the actual network graph) to find spare vertices.
\onlyLong{
Moreover, a node $w$ that does simulate vertices, marks an arbitrary 
vertex as \emph{taken} and transfers its other vertices to other nodes if 
requested.
To ensure a valid mapping $\Phi_t$, we need to transfer non-taken
vertices to contending nodes if the random walk of a contending node hits a
non-taken vertex $z$ and no other walk ends up at $z$ simultaneously.
}
\onlyLong{A similar analysis as for Phase 2 of $\inflate$ shows 
  Lemma~\ref{lem:atMostInflate} for deflation steps.}

\onlyLong{
Lemmas~\ref{lem:atMostSimple} and \ref{lem:atMostInflate}
imply the following: 
\begin{lemma} \label{lem:mapping}
At any step $t$, the network graph $G_t$, is $4\zeta$-balanced, i.e.,
$G_t$ has constant node degree and $\lambda_{G_t} \le \lambda$ where $1-\lambda$
is the spectral gap of the $p$-cycle expander family.
\end{lemma}
\begin{proof}
The result follows by induction on $t$.
For the base case, note that we initialize $G_0$ to be a
virtual mapping of the expander $\Z_0(p_0)$, which obviously guarantees that 
the network is $4\zeta$-balanced.
For the induction step, we perform a case distinction depending on whether 
$t$ is a simple or inflation/deflation step and apply the respective 
result, i.e.\ Lemmas~\ref{lem:atMostSimple} or
\ref{lem:atMostInflate}.
\end{proof}
}


\onlyLong{
\subsection{Amortizing (Simplified) Type-2 Recovery}  \label{sec:amortized}
\onlyShort{
  \noindent\textbf{Amortizing (Simplified) Type-2 Recovery:}  
}
We will now show that the expensive inflation/deflation steps occur rather
infrequently. This will allow us to amortize the cost of the worst case
bounds derived in Section~\ref{sec:infldefl}. 
Suppose that step $t$ was an inflation step.
By Fact~\ref{lem:inflateDeflate}.(a), this means that at least $(1-\theta)n$ nodes
had a load of $1$ at the beginning of $t$, and thus a load of $\le \zeta$ at the end of
$t$.
Thus, even after redistributing the additional load of the $\theta n$ nodes that
might have had a load of $>4\zeta$, a large fraction of nodes are in $\Low$ and
$\High$ at the end of $t$.
This guarantees that we perform type-1 recovery in $\Omega(n)$ steps, before the
next inflation/deflation is carried out\onlyShort{(cf.\ full paper \cite{full} for the proof)}.
A similar argument applies to the case when $\deflate$ is invoked, thus 
yielding amortized polylogarithmic bounds on messages and rounds per every 
step.
}

\onlyLong{
\begin{lemma} \label{lem:inflateDeflateSpacing}
  There exists a constant $\delta$ such that the following holds:
  If $t_1$ and $t_2$ are steps where type-2 recovery is performed (via
  $\inflate$ or $\deflate$),
  then $t_{1}$ and $t_{2}$ are separated by at 
  least $\delta n \in \Omega(n)$ steps with type-1 recovery where $n$ is the size of
  $G_{t_1}$.
 \end{lemma}
For the proof of Lemma~\ref{lem:inflateDeflateSpacing} we require the 
following 2 technical results:
\begin{claim} \label{claim:inflateLow}
  Suppose that $t$ is an inflation step. Then $|\Low_{t}| \ge (\theta + 
  \frac{1}{2})n$.
\end{claim}
\begin{proof}[of Claim~\ref{claim:inflateLow}]
  First, consider the set of nodes $S=U_t\setminus\High_{U_t}$, i.e.,  
  $\Load_{U_t}(u)=1$ for all $u \in S$. By 
  Fact~\ref{lem:inflateDeflate}.(a), we have $|S| \ge (1-\theta)n$.  
  Clearly, any such node $u \in S$ simulates at most $\zeta$ virtual 
  vertices after generating its own vertices for the new virtual graph, 
  hence the only way for $u$ to reach $\Load_{t}(u)>2\zeta$ is by taking 
  over vertices generated by other nodes. By the description of procedure 
  $\inflate$, only (a subset of) the nodes in $\High_{U_t}$ redistribute 
  their load by performing random walks. 
  By Lemma~\ref{lem:mapping}, we can assume that $G_{t_1-1}$ is 
  $4\zeta$-balanced.
  Since $|\High_{U_t}|<\theta n$, we have a total of $\le (4\zeta -4) 
  \theta n$ clouds that need to be redistributed. Observe that $v$ 
  continues to simulate $4$ clouds (i.e.\ $4\zeta$ nodes) by itself. Since 
  every node that is in $S$, has at most $\zeta$ virtual nodes, we can 
  bound the size of $\Low_{t}$ by subtracting the redistributed clouds 
  from $|S|$.  For the result to hold we need to show that
  \[(\theta + {1} / {2})  \le 1-\theta - (4\zeta - 4)\theta,\]
  which immediately follows by Inequality~\eqref{eq:theta}.
\end{proof}
\begin{claim} \label{claim:deflateHigh}
  Suppose that $t$ is a deflation step. Then $|\High_{t}| \ge 
  (\theta+\frac{1}{4\zeta})n$.
\end{claim}
\begin{proof}[of Claim~\ref{claim:deflateHigh}]
  Consider the set $S=\{u\colon \Load_{U_t}(u)>2\zeta\}$. Since $S=U_t 
  \setminus \Low_{U_t}$, Fact~\ref{lem:inflateDeflate}.(b) tells us that 
  $|S|\ge (1-\theta)n$ and therefore we have a total load of least 
  $(1-\theta)(2\zeta+1)n + \theta n$ in $U_t$.  By description of 
  procedure $\deflate$, every cloud of virtual vertices is contracted to a 
  single virtual vertex.  After deflating we are left with
  \begin{equation*} 
    \Load(G_t) \ge \left((1-\theta)(2+\frac{1}{\zeta}) + 
    \frac{\theta}{\zeta}\right)n.
  \end{equation*}
  To guarantee the sought bound on $\High_{t}$, we need to show that 
  $\Load(G_t) \ge (1 + \theta + \frac{1}{4\zeta})n$. This is true, since by 
  \eqref{eq:theta} we have $\theta \le \frac{1}{3}+\frac{1}{4\zeta}$.  
  Therefore, by the pigeon hole principle, at least $\theta+\frac{1}{4\zeta}$ 
  nodes have a load of at least $2$.
\end{proof}

\begin{proof}[Proof of Lemma~\ref{lem:inflateDeflateSpacing}]
  It is easy to see that the values computed by procedures $\computeHigh$ 
  and $\computeLow$ cannot simultaneously satisfy the thresholds of 
  Fact~\ref{lem:inflateDeflate}, i.e., $\inflate$ and $\deflate$ are 
  never called in the same step.  
  Let ${t_1,t_2,\dots}$ be the set of steps where, for every $i\ge 1$, a 
  node calls either Procedure $\inflate$ or Procedure $\deflate$ in $t_i$.  
  Fixing a constant $\delta$ such that 
  \begin{equation} \label{eq:delta}
    \delta \le {1} / {4\zeta}, 
  \end{equation} we need to show that
  $t_{i+1} - t_{i} \ge \delta n$.

  We distinguish several cases:
  \begin{description}
  \item[1. $t_i$ \inflate; $t_{i+1}$ \inflate:]
  By Fact~\ref{lem:inflateDeflate}.(a) we know that
  $\High_{U_{t_i}}$ contains less than $\theta n$ nodes. Since we inflate 
  in $t_i$, every node generates a new cloud of virtual vertices, i.e., 
  the load of every node in $U_{t_i}$ is (temporarily) at least $\zeta$ 
  (cf.\ Phase~1 of $\inflate$).  Moreover, the only way that 
  the load of a node $u$ can be reduced in $t_i$, is by transferring some 
  virtual vertices from $u$ to a newly inserted node $w$.  However, by the 
  description of $\inflate$ and the assumption that $\zeta>2$, we still 
  have $\Load_{t}(u)>1$ (and $\Load_{t}(w)\ge 1$), and therefore
  $\High_{G_{t_{i}}}\supseteq V(G_{t_i})\setminus\{w\}$.  Since the virtual 
  graph (and hence the total load) remains the same during the interval  
  $(t_i,t_{i+1})$, it follows by  Lemma~\ref{lem:mapping} that $\High$ can 
shrink by at most the number of insertions during $(t_i,t_{i+1})$.  Since 
$|\High_{U_{t_{i+1}}}|<\theta n$, more than $(1-\theta)n-1 > \delta n$ 
insertions are necessary.

  \item[2. $t_i$ \deflate; $t_{i+1}$ \deflate:]
  We first give a lower bound on the size of $\Low_{G_{t_i}}$. By 
  Lemma~\ref{lem:atMostInflate}, we know that load at every node is at 
  most $4\zeta$ in $U_{t_i}$. Since every virtual cloud (of size $\zeta$) 
  is contracted to a single virtual zertex in the new virtual graph, the 
  load at every node is reduced to at most $4$. Clearly, the nodes that 
  are redistributed do not increase the load of any node beyond $4$, thus 
  $\Low_{t}=G_t$. Analogously to Case 1, the virtual graph is not 
  changed until $t_{i+1}$ and Lemma~\ref{lem:mapping} tells us that $\Low$ 
  is only affected by deletions, i.e., $(1-\theta)n \ge \delta n$ steps 
  are necessary before step $t_{i+1}$.
    
  \item[3. $t_i$ \inflate; $t_{i+1}$ \deflate:]
  By Claim~\ref{claim:inflateLow}, we have $|\Low_{G_{t_i}}| \ge (\theta + 
  1/2) n$, while Fact~\ref{lem:inflateDeflate}.(b) tells us that 
  $|\Low_{G_{t_{i+1}}}|<\theta n$. Again, Lemma~\ref{lem:mapping} implies 
  that the adversary must delete at least $n/2 \ge \delta n$
  nodes during $(t_i,t_{i+1}]$. 

  \item[4. $t_i$ \deflate; $t_{i+1}$ \inflate:]
  By Claim~\ref{claim:deflateHigh}, we have $|\High_{G_{t_i}}| \ge (\theta 
  + \frac{1}{4\zeta})n$, and by Fact~\ref{lem:inflateDeflate}.(a), we 
  know that $|\High_{G_{t_{i+1}}}|<\theta n$.
  Applying Lemma~\ref{lem:mapping} shows that we must have more than 
  $\frac{1}{4\zeta}n \ge \delta n$ deletions before $t_{i+1}$.  
  \end{description}
\end{proof}
The following corollary summarizes the bounds that we get when using the simplified type-2 recovery:\footnote{We will show in Sec~\ref{sec:improved} how to get \emph{worst case} $O(\log n)$ complexity bounds.} 
\begin{corollary} \label{cor:runtime}
  Consider the (simplified) variant of \dex\ that uses Procedures~\ref{algo:inflate} and \ref{algo:deflate} to handle type-2 recovery.
  With high probability, the amortized running time of any step is
  $O(\log n )$ rounds, the amortized message complexity of any recovery
  step is $O(\log^2 n)$, while the amortized number of topology changes is $O(1)$.
\end{corollary}
}

\onlyShort{
\begin{algorithm}[h!]
\begin{algorithmic}[1]
  \small
\item[] \textbf{Given:} current network size $n$ (as computed
  by $\computeHigh$). All virtual vertices and nodes are unmarked.
\item[]
\item[] \textbf{\boldmath Phase 1. Compute larger $p$-cycle:}
\STATE Inserted node $u$ forwards an inflation request through the entire network. 
\STATE Initiating node $u$ floods a request to all other nodes to run this
process simultaneously; takes $O(\log n)$ time. 
\STATE Since every node $u$ knows the same virtual graph $\Z_{t-1}(p_i)$, all nodes
locally compute the same prime $p_{i+1} \in (4p_i,8p_i)$ and therefore the
same virtual expander $\Z_t(p_{i+1})$ with vertex set $\mathbb{Z}_{p_{i+1}}$.
\STATE (Compute the new set of locally simulated virtual vertices.) \\
Let $\alpha = \frac{p_{i+1}}{p_i}$ and define the function 
\begin{equation} \label{eq:cxInflate}
c(x) = \lfloor \alpha(x+1)\rfloor - \lfloor \alpha x\rfloor - 1.
\end{equation}
Replace every $x \in \Sim(u)$ (i.e.\ $x\in \Z_{t-1}(p_{i})$) with a cloud of virtual vertices $y_0,\dots,y_{c(x)}$ where 
$y_k = (\lfloor \alpha x \rfloor + k) \mod\ p_{i+1}$, for $0 \le k \le c(x)$.
That is, $\cloud{y_0}=\cdots=\cloud{y_{c(x)}}=\{y_0,\dots,y_{c(x)}\}$.

\FOR{every $x \in \Sim(u)$ and every $y_k$, $(0\le k\le c(x))$} 
\item[] (Compute the new set of edges.)\\ 
\begin{compactdesc}
\item Cycle edges: Add an edge between $u$ and the nodes $v$ and $v'$
  that simulate
  $y_{k}-1$ and $y_{k}+1$ by using the cycle edges of $\Z_{t-1}(p_i)$ in $G_t$.
\item Inverse edges: Add an edge between $u$ and the
node $v$ that simulates $y_k^{-1}$; node $v$ is found by 
solving a permutation routing instance.
\end{compactdesc}
\ENDFOR

\STATE After the construction of $\Z_t(p_{i+1})$ is complete, we
transfer a (newly generated) virtual vertex to the inserted node $u$ from
its neighbor $v$.
\item[]
\item[] \textbf{Phase 2. Perform load balancing:}
\IF{a node $w$ has $\Load(w)>2\zeta$ (i.e.\ $w\notin\Low$)}
\STATE Node $w$ marks all vertices in $\Sim(w)$ as \emph{full}.
\ENDIF
\IF{a node $v$ has load $k' > 4\zeta$ vertices}
  \item[] (Distribute all except $4\zeta$ vertices to other nodes.) 
  \FOR{each of the $k'-4\zeta$ vertices}
  \STATE Node $v$ marks itself as \emph{contending}.
  \WHILE{$v$ is \emph{contending}}
    \STATE Every \emph{contending} node $v$ performs a random walk of
    length $T=\Theta(\log n)$ on the virtual graph $\Z_t(p_{i+1})$ by
    forwarding a token $\tau_v$.
    This walk is simulated on the actual network $U_t$ (with constant
    overhead).
    To account for congestion, we give this walk $\rho=O(\log^2 n)$
    rounds to complete; once a token has taken $T$ steps it remains at its
    current vertex.
    \STATE If, after $\rho$ rounds, $\tau_v$ has reached a virtual
    vertex $z$ (simulated at some node $w$), no other
    token is currently at $z$, and $z$ is not marked as $\emph{full}$, 
    then $v$ marks itself as
    \emph{non-contending} and transfers a virtual vertex to $w$.
    Moreover, if the new load of $w$ is $>2\zeta$, we mark all vertices at
    $w$ as \emph{full}. \label{line:inflateTransfer}
  \ENDWHILE
  \ENDFOR
\ENDIF
\end{algorithmic}
\caption{\small Procedure \inflate. This is the simplified inflation yielding amortized $O(\log n)$ bounds. In the full paper we provide the pseudo code for the staggered inflation (Procedure \stagInflate) that guarantees $O(\log n)$ rounds and messages (whp) in the \emph{worst case}.}
\label{algo:inflate}
\end{algorithm}

\begin{algorithm}[h!]
  \small
\begin{algorithmic}[1]
\item[] \textbf{Given:} current network size $n$ (as computed
  by $\computeLow$). All virtual vertices and all nodes are unmarked.
\item[]
\item[] \textbf{\boldmath Phase 1. Compute smaller $p$-cycle:}
\STATE Node $u$ forwards a deflation request through the entire network. 
\STATE Initiating node $u$ floods a request to all other nodes to run this
procedure simultaneously; takes $O(\log n)$ time. 
\STATE Since every node $u$ knows the same virtual graph $\Z_{t-1}(p_i)$ of size
$p_i$, all nodes
locally compute the same prime $p_{s} \in (p_i/8,p_i/4)$ and therefore the
same virtual expander $\Z_t(p_s)$ with vertex set $\mathbb{Z}_{p_{s}}$.
\STATE (Compute the new set of locally simulated virtual vertices
$\NewSim(u) \subset \Z_t(p_s)$.) \\
Let $\alpha = \frac{p_{i}}{p_{s}}$.
For every $x \in \Sim(u)$ (i.e.\ $x\in \Z_{t-1}(p_i)$) we compute 
$y_x = \lfloor\frac{x}{\alpha} \rfloor$. \\
If there is no $x'<x$ such that $y_{x'} = y_x$, we add $y_x$ to $\NewSim(u)$.
This yields the (possibly empty) set \\
$\NewSim(u)=\{y_{x_1},\dots,y_{x_k}\}$, \\ 
where $x_1,\dots,x_k \in \Z_{t-1}(p_{i})$ are
a subset of the previously simulated vertices at $u$.
If $\NewSim(u)=\emptyset$, we mark $u$ as \emph{contending}.
For every vertex $y_{x_j}$, we set\\
$\cloud{y_{x_j}} = \{ m \colon  (m-1)\lfloor\alpha\rfloor \le y_{x_j}<m\lfloor\alpha\rfloor \}$.
\FOR{every $y_{x_j} \in \NewSim(u)$, $(1\le j\le k)$, } 
\item[] (Compute the new set of edges.)\\
\begin{compactdesc}
\item Cycle edges: Add an edge between $u$ and the nodes $v$ and $v'$
  that simulate
  $y_{x_j}-1$ and $y_{x_j}+1$ by using the cycle edges of $\Z_{t-1}(p_i)$ in $G_t$.
\item Inverse edges: Add an edge between $u$ and the node $v$ that
  simulates $y_k^{-1}$; node
  $v$ is found by solving a permutation routing instance.
\end{compactdesc}
\ENDFOR
\item[]
\item[] \textbf{Phase 2. Ensure virtual mapping:}
\IF{$\Sim(v)=\emptyset$}
\STATE Node $v$ marks itself as \emph{contending}.
\ELSE
\STATE Node $v$ reserves one vertex $z\in \Sim(v)$ for itself by marking $z$ as \emph{taken}.
\ENDIF
\WHILE{$v$ is \emph{contending}}
    \STATE Every \emph{contending} node $v$ performs a random walk of
    length $T=\Theta(\log n)$ on the virtual graph $\Z_t(p_{i+1})$ by
    forwarding a token $\tau_v$. This walk is simulated on the actual
    network $U_t$ (with constant overhead). 
    To account for congestion, we give this walk $\rho=O(\log^2 n)$
    rounds to complete; after $T$ random steps, the token remains at its
    current vertex.
    \STATE If, after $\rho$ rounds, $\tau_v$ has reached a virtual
    vertex $z$ (simulated at some node $w$), no other
    token is currently at $z$, and $z$ is not marked as \emph{taken}, then
    $v$ marks itself as \emph{non-contending} and requests $z$ to be
    transfered from $w$ to $v$ where it is marked as \emph{taken}.
    \ENDWHILE
\end{algorithmic}
\caption{\small Procedure \deflate. This is a simplified deflation procedure yielding amortized bounds. Note that Procedure \stagDeflate\ provides the same functionality using  $O(\log n)$ rounds and message whp even in the \textbf{worst case}.}
\label{algo:deflate}
\end{algorithm}
}

\onlyShort{\noindent\textbf{Worst Case Bounds for Type-2 Recovery:}}
\onlyLong{\subsection{Worst Case Bounds for Type-2 Recovery} \label{sec:improved}}
Whereas Lemma~\ref{lem:atMostSimple} shows $O(\log n)$ worst case bounds 
for steps with type-1 recovery,
handling of type-2 recovery that we have 
described so far yields \emph{amortized} polylogarithmic performance 
guarantees on messages and rounds w.h.p.\ per
step\onlyLong{ (cf.\ Cor.~\ref{cor:runtime})}.
We now present a  more complex algorithm for
type-2 recovery that yields worst case
logarithmic bounds on messages and rounds  per step (w.h.p.). 
The main idea of Procedures $\lightInflate$ and $\lightDeflate$\onlyShort{ (cf.\ full paper \cite{full})} is to spread 
the type-2 recovery over $\Theta(n)$ steps of type-1 recovery,
while still retaining constant node degrees and spectral expansion in \emph{every} step.
\onlyShort{The details of these procedures are described in the full paper \cite{full}.

These operations are orchestrated by a \emph{coordinator} node, which is the node that currently hosts
the virtual vertex with integer-label $0$; the coordinator keeps
track of additional information (requiring $O(\log n)$ memory): the current network size $n$ and the sizes of $\Low$ and $\High$ (but not the actual network topology), as follows:

Recall that we start out with an initial network of constant size, thus
initially coordinator $w$ can compute these values with constant overhead.
If an insertion or deletion of some neighbor of $v$ occurs and the algorithm
performs type-1 recovery,
then $v$ informs coordinator $w$ of the changes to the network
size and the sizes of $\High$ and $\Low$ (by routing a message along a
shortest path in $\Z_{t-1}(p_i)$) at the end of the type-1 recovery.
Node $v$ itself simulates some vertex $x \in \mathbb{Z}_{p_i}$ and hence 
can locally compute a shortest path from $x$ to $0$ (simulated at $w$) 
according to the edges in $\Z_t(p_i)$ (cf.\ Fact~\ref{fact:distances}).
The neighbors of $w$ replicate $w$'s state and update their copy in every step.
If the coordinator $w$ itself is deleted, the neighbors transfer its state to
the new coordinator that subsequently simulates $0$.
The coordinator state requires only $O(\log n)$ bits and thus can be sent in $1$
message.
Keep in mind that the coordinator does \emph{not} keep track of the actual
network topology or $\High$ and $\Low$, as this would require $\Omega(n)$
rounds for transferring the state to a new coordinator.

The following lemma summarizes the worst case bounds for staggered inflation/deflation via a coordinator node (cf.\ full paper \cite{full} for the proof):
\begin{lemma}[Worst Case Bounds Type-2 Rec.] \label{lem:advanced}
  Suppose that nodes initiate either $\stagInflate$ or $\stagDeflate$ during recovery in a
  step $t_0$ and $G_{t_0-1}$ is $4\zeta$-balanced.
  Then, for all $t \in [t_0,t_0+T]$ where $T=\lceil 2\theta n \rceil$ the
  following hold:
  \begin{compactitem}
    \item[(a)] Every node simulates at most $8\zeta$ vertices, and the recovery in
  $t$ requires at most $O(\log n)$ rounds and messages (w.h.p.), while making only
    $O(1)$ changes to the topology.
  \item[(b)] The spectral gap of $G_t$ is at least $\frac{(1-\lambda)^2}{8}$
   where $1-\lambda$ is the spectral gap of the $p$-cycle expander family.
 \end{compactitem}
\end{lemma}
Note that staggering the inflation/deflation yields a slightly worse (but nevertheless constant) spectral gap, as stated in Part (b) of Lemma~\ref{lem:advanced}.
This phenomenon is caused by the additional edges (of the new $p$-cycle) that are added on top of the current (old) $p$-cycle.
Since the edge expansion of our expander construction is lower bounded by the edge expansion of the old $p$-cycle, we can still get a constant bound on the spectral gap by invoking the Cheeger inequality (cf.\ Theorem 2.6 in \cite{Wigderson-exsurvey}).
}

\onlyLong{\paragraph{The coordinator}
\onlyShort{\noindent\textbf{The coordinator:}}
The node $w$ that currently simulates the virtual vertex with integer-label $0 \in
V(\Z_{t-1}(p_i)) = \mathbb{Z}_{p_i}$ is called \emph{coordinator} and keeps
track of the current network size $n$ and the sizes of $\Low$ and $\High$ 
as follows:
Recall that we start out with an initial network of constant size, thus
initially coordinator $w$ can compute these values with constant overhead.
If an insertion or deletion of some neighbor of $v$ occurs and the algorithm
performs type-1 recovery,
then $v$ informs coordinator $w$ of the changes to the network
size and the sizes of $\High$ and $\Low$ (by routing a message along a
shortest path in $\Z_{t-1}(p_i)$) at the end of the type-1 recovery.
Node $v$ itself simulates some vertex $x \in \mathbb{Z}_{p_i}$ and hence 
can locally compute a shortest path from $x$ to $0$ (simulated at $w$) 
according to the edges in $\Z_t(p_i)$ (cf.\ Fact~\ref{fact:distances}).
The neighbors of $w$ replicate $w$'s state and update their copy in every step.
If the coordinator $w$ itself is deleted, the neighbors transfer its state to
the new coordinator that subsequently simulates $0$.
The coordinator state requires only $O(\log n)$ bits and thus can be sent in $1$
message.
Keep in mind that the coordinator does \emph{not} keep track of the actual
network topology or $\High$ and $\Low$, as this would require $\Omega(n)$
rounds for transferring the state to a new coordinator.

\onlyShort{\noindent\textbf{Staggering the Inflation.}}
\onlyLong{\subsubsection{Staggering the Inflation} \label{sec:stagInflate}}
We proceed in 2 phases each of which is staggered over $\lceil\theta
n\rceil$ steps.
Let $PC$ denote the $p$-cycle at the beginning of the inflation step.
If, in some step $t_0$ the coordinator is notified (or notices itself) that $|\High| < 3\theta n$, it initiates (staggered)
inflation to build the new $p$-cycle $PC'$ on $\mathbb{Z}_{p_{i+1}}$ by sending a request to
the set of nodes $I$ that simulate the set of vertices $S=\{1,\dots,\lceil
  1/\theta \rceil\}$.
The $\lceil 1/\theta \rceil$ nodes in $I$ are called \emph{active in step $t_0$}. 

\onlyLong{\paragraph{Phase~1: Adding a larger $p$-cycle}}
\onlyShort{\noindent\textbf{Phase~1: Adding a larger $p$-cycle.}}
For every $x \in S$, the simulating node in $I$ adds a cloud of 
vertices as described in Phase~1 of $\inflate$.
More specifically, for vertex $x$ we add a set $Y \subset V(PC')$ of $c(x)$ vertices, as
defined in Eq.~\eqref{eq:cx} on page~\pageref{eq:cx}.
We denote this set of new vertices by $\NewSim(v)$.
That is, node $v$ now simulates $|\Load(v)| + |\NewSim(v)|$ many vertices.
In contrast to $\inflate$, however, vertex $x\in PC$ and its edges
are \emph{not} replaced by $Y$ (yet).
For each node in $y \in Y$, the simulating node $v$ computes the cycle edges and
inverse $y^{-1}\in PC'$.
It is possible that $y^{-1}$ is not among the vertices in $S$, and hence is
not yet simulated at any node in $I$.
Nevertheless, by Eq.~\eqref{eq:cx}, $v$ can locally compute the vertex $x'\in
PC$ that is going to be inflated to the cloud that contains $y^{-1} \in PC'$.
Therefore, we add an \emph{intermediate edge} $(y,x')$, which requires $O(\log n)$
messages and rounds.
Note that $|\NewSim(v)|$ could be as large as $4\zeta^2$.
Therefore, similarly as in Phase~2 of $\inflate$, a node in $I$ needs to redistribute newly generated vertices if $|\NewSim| > 4\zeta$ as follows:
The nodes in $I$ proceed by performing random walks to find node with 
small enough $\NewSim$.
Note that, even though $\stagInflate$ has not yet been processed at nodes in
$V(G_t) \setminus I$, any node that is hit by this random walk can
locally compute its set $\NewSim$ and thus check if it is able to simulate an additional vertex in the next $p$-cycle $PC'$.
Since we have $O(1)$ nodes in $I$ each having $O(1)$ vertices in their $\NewSim$ set, these walks can be done \emph{sequentially}, i.e., only $1$ walk is in 
progress at any time, which takes $O(\log n)$ rounds in total.

After these walks are complete and all nodes in $I$ have $|\NewSim|\le
4\zeta$, the coordinator is notified and forwards the inflation request to nodes $I'$ that simulate vertices $S'=\{\lceil 1/\theta \rceil + 1,\dots,2\lceil 1/\theta
\rceil\}$. (Again, this is done by locally computing the shortest path in
$PC$.)
In step $t_0+1$, the nodes in $I'$ become \emph{active} and proceed the same way
as nodes in $I$ in step $t_0$, i.e., clouds and intermediate edges are added for
every vertex in $S'$.


\onlyLong{\paragraph{Phase~2: Discarding the old $p$-cycle.}}
\onlyShort{\noindent\textbf{Phase~2: Discarding the old $p$-cycle.}}
Once Phase~1 is complete, i.e., all nodes are simulating the vertices in their respective $\NewSim$ set, the coordinator sends another request to the set of nodes $I$ ---the \emph{active nodes} in the next step---that are still simulating the
set $S$ of the first $\lceil 1/\theta \rceil$ vertices in the old $p$-cycle $PC$.
Every node in $I$ drops all edges of $PC$ and stops simulating vertices in $V(PC)$.
In the next step, this request is forwarded to the nodes that simulate the next $\lceil \theta n\rceil$ vertices and reaches all nodes within $\theta n$ steps.
After $T= 2\theta n$ steps\footnote{For clarity of presentation, we assume that $2\theta n$ is an integer.}, the inflation has been processed at all nodes.

Finally, we need to argue that type-1 recovery succeeds with high probability while the staggered inflation is ongoing:
If the adversary inserts a node $w$ in any of these $T$ steps, we can simply assign one of the newly inflated vertices to $w$.
If, on the other hand, the adversary deletes nodes, we need to show that, for any $t \in [t_0,t_0+T]$, it holds that $|\Low_t| \ge \theta n$.
Recalling that the coordinator invoked the inflation in step $t_0$ because $|\Spare_{t_0}| < 3\theta n$, it follows that $|\Low_{t_0}| \ge n - 3\theta n$.
In the worst case, the adversary deletes $1$ node in every one of the following $T$ steps, which increases the load of at most $2\theta n$ nodes.
This yields that $|\Low_t| \ge |\Low_{t_0}| - 2\theta n = n - 5\theta n \ge \theta n$, due to \eqref{eq:theta}.
Thus, since the assumption of Lemma~\ref{lem:gillman}.(b) holds throughout steps $[t_0,t_0+T]$, type-1 recovery succeeds with high probability as required.

\onlyLong{\subsubsection{Staggering the Deflation} \label{sec:stagDeflate}

We now describe the implementation of $\stagDeflate$ that yields a worst case
bound of $O(\log n)$ for the recovery in every step. 
Similarly to $\stagInflate$, the coordinator initiates a staggered deflation
whenever the threshold $|\Low| < 3\theta$ is reached and the algorithm proceeds
in two phases:

\paragraph{Phase 1: Adding a smaller $p$-cycle} 

Phase~1 is initiated during the recovery in some step $t_0$ by the
(current) coordinator $w$ who sends a message to nodes $S$ that simulate
vertices $I=\{1,\dots,\lceil 1/\theta \rceil\}$.
The nodes in $S$ become \emph{active} in the recovery of step $t_0$ and will start
simulating the (smaller) $p$-cycle $\Z(p_s)$ in addition to the current
$p$-cycle $\Z_{t_0}(p_i)$ by the end of the step, as described below.
As in the case of $\stagInflate$, $w$ can efficiently find $S$
(requiring only $O(\log n)$ messages and rounds) by following the shortest path
in the current $p$-cycle $\Z_{t_0}(p_{i})$.
Let $\alpha = p_i / p_s$ and consider some node $v \in S$.
For every $x \in \Sim(v)$, node $v$ computes $y_x = \lfloor x/\alpha
\rfloor$ and starts simulating $y_x \in \Z(p_s)$, if there is no $x' < x$ such
that $x' = \lfloor x' / \alpha \rfloor$.
That is, the new vertices are determined exactly the same way as in
Phase~1 of $\deflate$ and node $v$ adds $y_x$ to $\NewSim(v)$.

Assuming that there is a $y_x \in \NewSim(v)$, node $v$ marks all
$x_1,\dots,x_k \in \Z_{t_0}(p_i)$ that satisfy $y_x = \lfloor x_j / \alpha
\rfloor$, for $1\le j \le k$, as \emph{taken}.
We say that $x$ \emph{dominates} $x_1,\dots,x_k$ and we call the set
$\{x_1,\dots,x_k\}$ a \emph{deflation cloud}.
Note that some of the vertices of a deflation cloud might be simulated at other
nodes.
Nevertheless, according to the edges of $\Z_{t_0}(p_i)$, these nodes are in an
$O(1)$ neighborhood of $v$ and can thus be notified to mark the corresponding
vertices as \emph{taken}.
Intuitively speaking, if a node $v$ simulates such a dominating vertex
$x$, then $v$ is guaranteed to simulate a vertex in the new $p$-cycle
$\Z(p_s)$, and the surjective requirement of the virtual mapping is
satisfied at $v$.
Thus our goal is to ensure that every node in $S$ simulates a dominating vertex
by the end of the recovery of this step.

The problematic case is when none of the vertices currently simulated at node
$v$ dominates for some $y_x \in \Z(p_s)$.
To ensure that $v$ simulates at least $1$ vertex of the new $p$-cycle
$\Z(p_s)$, node $v$ initiates a random walk on the graph $\Z(p_s)$ to find a
dominating vertex that has not been marked \emph{taken}.
We thus lower-bound the size of dominating vertices that are never marked as
\emph{taken}, in any of the $\theta n$ steps during which $\stagDeflate$ is in
progress:

Recall that the coordinator invoked $\stagDeflate$ because $|\Low|<3\theta$.
This means that $\ge (1-3\theta)n$ nodes have $\Load_{t_0} > 2\zeta$ and the total
load in the network is at least $(2\zeta(1-3\theta)+3\theta)n$ since every
node simulates at least $1$ vertex.
If some node simulates a dominating vertex $x$, then \emph{all} of the (at
most $\alpha \le 8$) dominated vertices $x'>x$ that also satisfy $y_x = \lfloor x' /
\alpha \rfloor$ are marked as \emph{taken}.
Considering that $\zeta \le 8$, the number of dominating vertices is at least
$(2\zeta(1-3\theta)+3\theta)n/8  \ge (2 - \theta(6+3/\zeta))n$.
In each of the $\theta n$ steps while Phase~1 of $\stagDeflate$ is in progress, the adversary might insert some node that starts simulating a dominating vertex.
Thus, in total we must give up $n+\theta n$ dominating vertices.
It follows that the number of dominating vertices that are available
(i.e.\ not needed by any node) is at least
$$
(2 - \theta(6+3/\zeta))n - n  - \theta n = (1-\theta(6+3/\zeta+1))n.
$$
Recalling \eqref{eq:theta} on page~\pageref{eq:theta}, the right
hand size is at least a constant fraction of $n$, i.e., the set of available
dominating vertices $D$ has size $\ge \eps n$ while $\stagDeflate$ is in
progress, for some $\eps >0$.
Similarly to the proof of Lemma~\ref{lem:gillman}, we can use the concentration bound of \cite{G1998:Chernoff} to show that a random walk of $v$ of length $O(\log n)$ hits a vertex in $D$ with high probability.
To avoid clashes between nodes in $S$, we perform these walks sequentially.
Since there are only $O(1)$ nodes in $S$, this takes overall $O(\log n)$ time
and messages. 

In step $t_0+1$, the nodes that simulate the next $1/\theta$ vertices become
active and so forth, until the request returns to the (current)
coordinator after $\lceil \theta n\rceil$ steps.

\onlyLong{\paragraph{Phase~2: Discarding the old $p$-cycle}}
\onlyShort{\noindent\textbf{Phase~2: Discarding the old $p$-cycle.}}
Once the new (smaller) $p$-cycle $\Z(p_s)$ has been fully constructed,
the coordinator sends another request to the nodes in $I$---which again
become \emph{active nodes}---that simulate the $\lceil 1/\theta \rceil$
vertices in $S$.
Every node in $I$ drops all edges of $E(\Z(p_{i}))$ and stops simulating
vertices in $V(\Z(p_{i}))$.
This request is again forwarded to the nodes that simulate the next $\theta n$
vertices and finally has reached all nodes within $\theta n$ steps.
Thus, after $T=\lceil 2\theta n \rceil$ steps, the deflation has been
completed at all nodes.

Since the coordinator initiated the deflation because $|\Low_{t_0}| < 3\theta n$, it follows that $|\Spare_{t_0}|\ge n - 3\theta n$, and thus $|\Spare_t|\ge \theta n$, for all steps $t \in [t_0,t_0+T]$.
Therefore, by an argument similar to Procedure $\stagInflate$, it follows that
type-1 recovery succeeds w.h.p.\ until the new virtual graph is in place.

\begin{lemma}[Worst Case Bounds Type-2 Recovery] \label{lem:advanced}
  Suppose that the coordinator initiates either $\stagInflate$ of $\stagDeflate$ during recovery in some
  step $t_0$ and $G_{t_0-1}$ is $4\zeta$-balanced.
  Then, for all steps $t \in [t_0,t_0+T]$ where $T=\lceil 2\theta n \rceil$ the
  following hold:
  \begin{compactenum}
  \item[(a)] Every node simulates at most $8\zeta$ vertices and the recovery in
    $t$ requires at most $O(\log n)$ rounds and messages (w.h.p.), while making only
    $O(1)$ changes to the topology.
 \item[(b)] The spectral gap of $G_t$ is at least $\frac{(1-\lambda)^2}{8}$
   where $1-\lambda$ is the spectral gap of the $p$-cycle expander family.
\end{compactenum}
\end{lemma}
\enlargethispage{\baselineskip}
\begin{proof}
First consider (a):
The bound of $8\zeta$ vertices follows from the fact that, during
$\stagInflate$ and $\stagDeflate$, any node simulates at most $4\zeta$
vertices from both $p$-cycles.
This immediately implies a constant node degree.
Recalling the description of Phases~1 and 2 for $\stagInflate$ and
$\stagDeflate$, we observe that either phase causes an overhead of $O(\log n)$
messages and rounds for each of the $O(1)$ active nodes during recovery in some step $t \in [t_0,t_0+T]$; the worst case bounds of (a) follow.

We now argue that, at any time during the staggered inflation, we
still guarantee a constant spectral gap.
By the left inequality of Theorem~\ref{thm:cheeger} 
(App.~\ref{app-sec:preliminaries}), a spectral
expansion of $\lambda_{G_{t_0-1}}$ yields an edge expansion (cf.\
Def.~\ref{def:edgeexpansion} in App.~\ref{app-sec:preliminaries})
$h(G_{t_0-1}) \ge (1-\lambda_{G_{t_0-1}})/2$, which is $O(1)$.
For both, $\stagInflate$ and $\stagDeflate$, it holds that during Phase~1,
nodes still simulate the full set of vertices and edges of the old $p$-cycle
and some intermediate edges of the new $p$-cycle.
In Phase~2, on the other hand, nodes simulate a full set of vertices and edges
of the new $p$-cycle and some edges of the old $p$-cycle.
Thus, during either phase, the edge expansion is
bounded from below by the edge expansion of the $p$-cycle expander family. 
That is, we have $h(G_{t}) \ge h(G_{t_0-1})$, for any step $t \in [t_0,t_0+T]$.
It is possible, however, that the additional intermediate edges decrease the
spectral expansion.
Nevertheless, we can apply the right inequality of Theorem~\ref{thm:cheeger} to
get $$1-\lambda_{G_{t}} \ge \frac{h^2(G_{t_0-1})}{2} \ge (1-\lambda_{G_{t_0-1}})^2 / 8,$$ as required.
\end{proof}

\subsubsection{Proof of Theorem~\ref{thm:main}}
Lemmas~\ref{lem:atMostSimple} and \ref{lem:advanced} imply the sought worst case
bounds of Theorem~\ref{thm:main}.
The constant node degree follows from Lemma~\ref{lem:atMostSimple}.(a) and
Lemma~\ref{lem:advanced}.(a).
Moreover, Lemma~\ref{lem:advanced}.(b) shows a constant spectral gap for
(the improved) type-2 recovery steps and the analogous result for type-1
recovery follows from Lemma~\ref{lem:fanChung} and
Lemma~\ref{lem:atMostSimple}.(a).
}
%
}
\onlyShort{\noindent{\bf Implementing a Distributed Hash Table (DHT):} \label{sec:dht}}
\onlyLong{\subsubsection{\bf Implementing a Distributed Hash Table (DHT)} \label{sec:dht}}
We can leverage our expander maintenance algorithm to implement a DHT as follows:
Recall that the current size $s$ of the $p$-cycle is global knowledge.
Thus every node uses the same hash function $h_s$, which uniformly maps keys to the vertex set of the $p$-cycle.

\onlyShort{ We consider the case where no staggered inflation/deflation is in progress and defer the more complex case to the full paper \cite{full}:}%
\onlyLong{We first look at the case where no staggered inflation/deflation is in progress:}
If some node $u$ wants to store a key value pair $(k,val)$ in the DHT, $u$
computes the index $z:=h_s(k)$.
Recall that $u$ can locally compute a shortest path $z_1,z_2,\dots,z$ (in the $p$-cycle) starting at one of its simulated virtual vertices $z_1$ and ending at vertex $z$.
Even though node $u$ does not know how this entire path is mapped to the actual network, it can locally route by simply forwarding $(k,val)$ to the neighboring node $v_2$ that simulates $z_2$; node $v_1$ in turn forwards the key value pair to the node that simulates $z_3$ and so forth.
The node that simulates vertex $z$ stores the entry $(k,val)$.
If $z$ is transferred to some other node $w$ at some point, then storing $(k,val)$ becomes the responsibility of $w$.
Similarly, for finding the value associated with a given key $k'$, node $u$ routes a message to the node simulating vertex $h_s(k')$, who returns the associated value to $u$.
It is easy to see that insertion and lookup both take $O(\log n)$ time and $O(\log n)$ messages and that the load at each node is balanced.
\onlyLong{

We now consider the case where a staggered inflation (cf.\ Procedure~\ref{algo:stagInflate}) has been started and some set of nodes have already constructed the next larger $p$-cycle of size $s'$.
Let $PC$ be the old (but not yet discarded) $p$-cycle and let $PC'$ denote the new $p$-cycle that is currently under construction.
For a given vertex $z_i \in PC$ we use the notation $z_i'$ to identify the unique vertex in $PC'$  that has the same integer label as $z_i$.

Note that all nodes have knowledge of the hash function $h_{s'}$, which maps to the vertices of $PC'$.
Suppose that a node $u \in S $ becomes active during Phase~1 of the staggered inflation and starts simulating vertices $z_1',\dots,z_\ell' \in PC'$.
(For clarity of presentation, we assume that $\ell \le 4\zeta$, thus $u$ does not need to redistribute these vertices. The case where $\ell > 4\zeta$ can be handled by splitting the operations described below among the nodes that end up simulating $z_1',\dots,z_\ell'$.)
At this point, some set $S$ of $j$ nodes might still be simulating the corresponding vertices $z_1,\dots,z_\ell \in PC$, where $j\le \ell \in O(1)$.
Thus node $u$ contacts the nodes in $S$ (by routing a message to vertices $z_1,\dots,z_\ell$ along the edges of $PC$) and causes these nodes to transfer all data items associated with $z_1,\dots,z_\ell$ to $u$.
From this point on until the staggered inflation is complete, the nodes in $S$ forward all insertion and lookup requests regarding $z_1,\dots,z_\ell$ to node $u$.
Note that the above operations require at most $O(\log n)$ rounds and messages, and thus only increase the complexity of the staggered inflation by a constant factor.

The case where a staggered deflation is in progress is handled similarly, by transferring key value pairs of vertices that are contracted to a single vertex in the new (smaller) $p$-cycle, whenever the simulating node becomes active.
}


\section{Conclusion}
\vspace{-0.35cm}
\enlargethispage{\baselineskip}
We have presented a distributed algorithm for maintaining an expander
efficiently using only $O(\log n)$ messages and rounds
in the worst case and guarantee a constant spectral gap and node degrees deterministically at all times.
There are some open questions: 
Is an $O(\log n)$ overhead sufficient for handling even a linear
number of insertion/deletions per step? 
How can we deal with malicious nodes in this setting?
\bibliographystyle{plain}
\bibliography{selfheal,papers}

\onlyLong{%


\appendix


\section{Previous Results and Definitions} \label{app-sec:preliminaries}

For completeness, we restate some definitions and results from  literature
that we reference in the paper.

We use the notation $G = \graph{n,d,\lambda_G}$ to denote a $d$-regular 
graph $G$ of $n$ nodes where the second largest eigenvalue of the 
adjacency matrix is $\lambda_G$.

\begin{definition}[Expanders, spectral gap] \label{def:expanders}
Let $d$ be a constant and let
$\G=(\graph{n_0,d,\lambda_0},\graph{n_1,d,\lambda_1},\dots)$ be an infinite
sequence of graphs where $n_{i+1}> n_i$ for all $i\ge 0$. We say that 
\emph{$\G$ is an
expander family of degree $d$} if there is a constant $\lambda<1$ such that $\lambda_i\le
\lambda$, for all $i\ge 0$. Moreover, the individual graphs in $\G$ are called 
\emph{expanders with spectral gap $1-\lambda$}.
\end{definition}

\begin{lemma}[cf.\ Lemma~1.15 in \cite{chungbook}] \label{lem:contraction}
  If $H$ is formed by vertex contractions from a graph $G$, then $\lambda_H \le
  \lambda_G$.
\end{lemma}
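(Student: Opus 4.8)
The plan is to prove the contraction inequality via the variational (Courant--Fischer) characterization of eigenvalues, lifting an optimal test function from $H$ back to $G$. First I would make the operation precise: writing $\phi\colon V(G)\twoheadrightarrow V(H)$ for the map that collapses each contracted set to a single vertex, the adjacency matrix of the (multi)graph $H$ can be written as $A_H=P^{\mathsf T}A_G P$, where $P\in\{0,1\}^{V(G)\times V(H)}$ has exactly one $1$ per row, recording $\phi$. Set $D=P^{\mathsf T}P$ (the diagonal matrix of fibre sizes) and $Q=PD^{-1/2}$, so that $Q$ has orthonormal columns and $Q^{\mathsf T}A_GQ=D^{-1/2}A_HD^{-1/2}$ is the normalized adjacency operator of $H$.

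The core step is the lifting argument. Take a function $g$ on $V(H)$ that is orthogonal, in the degree inner product, to the constant vector and that achieves the spectral gap of $H$, and pull it back to the fibre-constant function $f=g\circ\phi$ on $V(G)$. Two elementary observations do the work: every edge of $G$ lying inside a single fibre contributes $(f(u)-f(v))^2=0$ to the Dirichlet sum, while the edges of $G$ crossing between fibres are in multiplicity-preserving bijection with the edges of $H$; hence $\sum_{\{u,v\}\in E(G)}(f(u)-f(v))^2=\sum_{\{a,b\}\in E(H)}(g(a)-g(b))^2$. Grouping the weighted $\ell_2$ norm by fibres gives $\sum_{v}d^G_v f(v)^2=\sum_{a}d^H_a g(a)^2$, and likewise $\sum_v d^G_v f(v)=\sum_a d^H_a g(a)=0$, so $f$ is an admissible test function for $G$ with exactly the same Rayleigh quotient as $g$. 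Courant--Fischer then yields that the spectral gap of $G$ is at most that of $H$, which for $d$-regular graphs is precisely the claim $\lambda_H\le\lambda_G$ for the second-largest adjacency eigenvalue. Equivalently, and with the same wrinkle, Cauchy interlacing applied to the compression $Q^{\mathsf T}A_GQ$ of the symmetric matrix $A_G$ gives $\lambda_2(Q^{\mathsf T}A_GQ)\le\lambda_2(A_G)=\lambda_G$.

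The delicate point, and the one I would spend the most care on, is the bookkeeping of vertex degrees and self-loops under contraction, i.e.\ the gap between the plain adjacency spectrum of $H$ and the normalized operator $D^{-1/2}A_HD^{-1/2}$ that interlacing actually controls. When all contracted fibres have equal cardinality — in particular when at most one pair of vertices is merged at a time, or when $H$ is again $d$-regular as in our application — $D$ is a scalar multiple of the identity and the two spectra coincide up to that scalar, so the inequality passes through verbatim. In general one argues through the normalized Laplacian, exactly as in Chung's treatment, fixing the convention by which an edge internal to a fibre becomes a self-loop counted in the degree of the merged vertex; once that convention is pinned down, the lifting argument above goes through with no further obstacle.
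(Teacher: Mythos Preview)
The paper does not supply its own proof of this lemma; it is quoted in the appendix of background results and attributed to Lemma~1.15 of Chung's book.  Your argument via the Rayleigh--Courant--Fischer characterization, lifting the extremal function on $H$ to the fibre-constant function $f=g\circ\phi$ on $G$, is correct and is exactly Chung's proof.  The computation that the Dirichlet sum and the degree-weighted $\ell_2$ norm both transfer verbatim (once self-loops are counted so that $d_a^H=\sum_{v\in\phi^{-1}(a)} d_v^G$) is the heart of the matter, and you have it right.

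One small slip in the discussion: merging ``at most one pair of vertices'' does \emph{not} make all fibres of equal size (one fibre has two elements, the others one), so $D$ is not scalar there.  This is harmless, since your Dirichlet-form argument already covers the general case without any assumption on fibre sizes; only the Cauchy-interlacing shortcut via $Q^{\mathsf T}A_GQ$ needs the extra care you flag, and even that goes through once you observe that for $d$-regular $G$ the degree matrix of $H$ is $D_H=dD$, so $D^{-1/2}A_HD^{-1/2}=d\,D_H^{-1/2}A_HD_H^{-1/2}$ is $d$ times the normalized adjacency of $H$ regardless of fibre sizes.
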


\begin{lemma} \label{lem:randomWalk}
  Consider an expander network and suppose that every node initiates a
  random walk of length $\Theta(\log n)$ and only $1$ random walk token
  can be sent over an edge in each direction in a round.
  Then all random walks have completed with high probability after
  $O(\log^2 n)$ rounds.
\end{lemma}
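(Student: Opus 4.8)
The plan is to show that, with high probability, every node ever holds only $O(\log n)$ random-walk tokens at any given step of the walks, and then to route the tokens in $\Theta(\log n)$ synchronized \emph{phases} — one phase per step of the walks — arguing that each phase incurs only $O(\log n)$ rounds of edge congestion, for a total of $O(\log^2 n)$ rounds.

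For the occupancy bound, let $\ell = \Theta(\log n)$ bound the walk lengths and, for $0 \le t \le \ell$, call the \emph{step-$t$ configuration} the placement obtained by advancing each of the $n$ tokens exactly $t$ times along its own (independent) walk. Since the network is $d$-regular with $d = O(1)$, its transition matrix $P$ is doubly stochastic, and hence so is $P^t$; therefore, writing $X_u^{(t)}$ for the number of tokens at $u$ in the step-$t$ configuration, $\Exp[X_u^{(t)}] = \sum_v P^t(v,u) = 1$. As $X_u^{(t)}$ is a sum over $v$ of independent $\{0,1\}$ indicators (distinct walks use independent randomness), a Chernoff bound makes $\Prob{X_u^{(t)} \ge c\log n}$ super-polynomially small once $c$ is a sufficiently large constant, and a union bound over the $O(n\log n)$ pairs $(u,t)$ shows that, w.h.p., $X_u^{(t)} = O(\log n)$ for every node $u$ and every $t \le \ell$ simultaneously.

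We then run phases $t = 1, \dots, \ell$, where phase $t$ advances every token from its step-$(t-1)$ position to its step-$t$ position and begins only after phase $t-1$ has finished (enforceable in the synchronous model at constant-factor cost). At the start of phase $t$ each node $u$ holds at most $O(\log n)$ tokens, each of which has independently chosen the incident edge it will cross; hence the number of tokens that must traverse any fixed edge in a fixed direction during phase $t$ is at most $O(\log n)$, so — since one token can cross each edge in each direction per round — every node finishes sending (and, by the step-$t$ bound, receiving) its tokens within $O(\log n)$ rounds. Thus each of the $\Theta(\log n)$ phases lasts $O(\log n)$ rounds, giving the claimed $O(\log^2 n)$ total; note that delaying a token because of congestion changes only \emph{when} it moves, not \emph{which} edge it takes, so the walk distributions, and hence the occupancy bounds, are unaffected. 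The one genuinely delicate step is the concentration argument: one must verify that the per-step occupancy really is a sum of independent indicators with mean exactly $1$, which is where $d$-regularity (double stochasticity of $P^t$) is used, so that a single Chernoff-and-union-bound estimate controls all $\Theta(\log n)$ configurations at once; everything after that is routine bookkeeping about edge congestion.
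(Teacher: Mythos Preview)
Your argument is correct, but it is quite different from what the paper does. The paper's proof is a one-line black-box citation: it invokes Lemma~2.2 of \cite{drw1}, which states that if every node launches $\eta$ random walks of length $\mu$ in a graph of minimum degree $\delta$, then all walks finish in $O(\eta\mu\log n/\delta)$ rounds; plugging in $\eta=1$, $\mu=\Theta(\log n)$, and $\delta=\Theta(1)$ gives the $O(\log^2 n)$ bound immediately.

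Your route is a direct, self-contained proof. You use $d$-regularity to get double stochasticity of $P^t$, hence unit expected occupancy at every node and every step; a Chernoff plus union bound then caps all occupancies at $O(\log n)$ simultaneously, and the $\Theta(\log n)$ synchronized phases each clear in $O(\log n)$ rounds because no edge is asked to carry more than $O(\log n)$ tokens in a direction. This is essentially the argument underlying the cited lemma (specialized to regular graphs and one walk per node), so you have effectively re-derived the needed instance rather than importing it. The benefit of your version is that it is elementary and exposes exactly which properties are used---note that, like the paper's cited bound, your proof never uses the spectral gap, only bounded-degree regularity---whereas the paper's version is shorter and defers the analysis to prior work.
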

\begin{proof}
  The result follows by instantiating Lemma~2.2 of \cite{drw1}, which
  shows that, if every node initiates $\eta$ random walks of length $\mu$,
  then all walks complete within $O(\frac{\eta\mu\log n}{\delta})$ rounds
  where $\delta$ is the minimum node degree. 
\end{proof}

\begin{corollary}[Corollary~7.7.3 in \cite{S1998:Universal-abbrv}]
  \label{cor:routing}
  In any bounded degree expander of $n$ nodes, $n$ packets, one per node,
  can be routed according to an arbitrary permutation in
  $O\left(\frac{\log n (\log\log n)^2}{\log\log\log n}\right)$ rounds.
\end{corollary}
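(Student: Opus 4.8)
The plan is the classical three-step recipe for permutation routing on a network with a constant spectral gap. First I would reduce the adversarial instance to random routing in the style of Valiant: replace the given permutation $\pi$ by two consecutive rounds of routing to random destinations, sending each packet from its source to an independent uniformly random node and then on to its $\pi$-target. This at most doubles the running time, but turns the instance into two load-balanced instances in which the packets crossing any fixed edge are governed by (nearly) independent random choices---which is what will make congestion concentrate.

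Next I would realize each random-destination round by letting every packet follow a random walk. By Definition~\ref{def:expanders} the spectral gap $1-\lambda$ is bounded away from $0$, so a walk of length $D=\Theta(\log n)$ with a large enough constant reaches a distribution within $n^{-3}$ of uniform in total variation; hence every intermediate destination is uniform up to negligible error and every packet's path has dilation $D=O(\log n)$. For the congestion, fix an edge $e$: since the graph is $d$-regular with $d=O(1)$ and the walk is essentially stationary, a single walk crosses $e$ in expectation $\Theta(D/(dn))$ times, so the total over all $n$ walks is $\Theta(\log n)$ in expectation. Because the walks use independent randomness, a Chernoff bound gives congestion $O(\log n)$ on $e$ except with probability $n^{-\Omega(1)}$; a union bound over the $O(n)$ edges (and over the coupling to a truly uniform destination) then shows that w.h.p.\ the whole path system has $C=O(\log n)$ and $D=O(\log n)$.

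The remaining, and genuinely delicate, step is to convert a congestion-$C$, dilation-$D$ path system into an actual movement schedule. The elementary independent-random-delays argument already gives a schedule of length $O(C+D\log(nD))=O(\log^{2}n)$, which merely reproves Lemma~\ref{lem:randomWalk} and is too weak. To reach $O\!\left(\frac{\log n\,(\log\log n)^{2}}{\log\log\log n}\right)$ I would invoke the sharper near-constructive scheduling theorem that underlies Corollary~7.7.3 of~\cite{S1998:Universal-abbrv}: it schedules any congestion-$C$, dilation-$D$ instance in $O(C+D)$ time multiplied by a $\frac{(\log\log n)^{2}}{\log\log\log n}$ overhead---the cost of recursively refining the random delays over $O(\log\log n)$ levels, each losing an $O(\log\log n/\log\log\log n)$ factor---while keeping queue sizes $O(1)$. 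Substituting $C,D=O(\log n)$ from the walk analysis then gives the claimed bound. The first two steps are routine; the hard part is this last scheduling result, which is precisely what makes the $(\log\log n)^{2}/\log\log\log n$ factor appear and which the paper therefore quotes rather than reproves.
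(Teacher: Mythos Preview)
The paper does not give a proof of this statement at all: it appears in Appendix~\ref{app-sec:preliminaries} (``Previous Results and Definitions''), is explicitly labelled as Corollary~7.7.3 of~\cite{S1998:Universal-abbrv}, and is simply quoted without argument. So there is no ``paper's own proof'' to compare your proposal against; as you yourself note in your last sentence, the result is imported rather than reproved.

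That said, your sketch is a faithful outline of the standard argument behind the quoted corollary: Valiant's two-phase reduction to random destinations, path selection via $\Theta(\log n)$-step random walks on a constant-gap expander to obtain congestion and dilation both $O(\log n)$, and then the near-optimal packet-scheduling theorem (of Leighton--Maggs--Rao type, as presented in~\cite{S1998:Universal-abbrv}) that turns a $(C,D)$ path system into a schedule of length $O\!\bigl((C{+}D)\cdot\frac{(\log\log n)^{2}}{\log\log\log n}\bigr)$. One small quibble: the sharp scheduling bound you invoke is itself the content of the cited corollary, so your ``proof'' ultimately rests on the same black box the paper cites; everything prior to that step is routine, and you identify this correctly.
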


\begin{lemma}[Mixing Lemma, cf.\ Lemma~2.5~\cite{Wigderson-exsurvey}]
  \label{lemma:expmixlemma}
  Let $G$ be a $d$-regular graph of $n$
  vertices and spectral gap $1-\lambda$.
  Then,  for all set of nodes $S,T \subseteq V(G)$, we have that
$\left| |E(S,T)| - \frac{d|S||T|}{n} \right| \le \lambda d \sqrt{|S||T|}.$
\end{lemma}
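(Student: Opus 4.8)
The plan is to run the standard spectral argument for the expander mixing lemma. Let $A$ be the adjacency matrix of $G$; since $G$ is undirected, $A$ is symmetric, and since $G$ is $d$-regular, the all-ones vector $\mathbf{1}$ is an eigenvector of $A$ with eigenvalue $d$, which is the largest eigenvalue in absolute value. Writing the remaining eigenvalues as $\mu_2,\dots,\mu_n$, the assumption that the spectral gap is $1-\lambda$ translates into $|\mu_i|\le \lambda d$ for all $i\ge 2$. Let $\mathbf{1}_S,\mathbf{1}_T\in\{0,1\}^n$ be the characteristic vectors of $S$ and $T$. The first step is to record the identity $|E(S,T)| = \mathbf{1}_S^\top A\,\mathbf{1}_T$; here one must fix once and for all the convention used for $E(S,T)$ when $S\cap T\neq\emptyset$ and check that the bookkeeping (edges counted once versus twice, loops, etc.) is consistent with that identity.

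Next I would split each characteristic vector into its component along $\mathbf{1}$ and an orthogonal remainder: $\mathbf{1}_S = \tfrac{|S|}{n}\mathbf{1} + f$ and $\mathbf{1}_T = \tfrac{|T|}{n}\mathbf{1} + g$, with $f\perp\mathbf{1}$ and $g\perp\mathbf{1}$. Substituting into the bilinear form and using that $A$ is symmetric with $\mathbf{1}$ an eigenvector makes the two cross terms vanish ($f^\top A\mathbf{1} = d\,f^\top\mathbf{1}=0$ and $\mathbf{1}^\top A g = d\,\mathbf{1}^\top g = 0$), so that, since $\mathbf{1}^\top A\,\mathbf{1} = dn$,
\[
\mathbf{1}_S^\top A\,\mathbf{1}_T \;=\; \frac{|S||T|}{n^2}\,\mathbf{1}^\top A\,\mathbf{1} + f^\top A\, g \;=\; \frac{d|S||T|}{n} + f^\top A\, g .
\]
It therefore remains only to bound $|f^\top A\, g|$.

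For that, I would expand $g$ in an orthonormal eigenbasis of $A$ orthogonal to $\mathbf{1}$; since every such eigenvalue has absolute value at most $\lambda d$, this gives $\|A g\|\le \lambda d\,\|g\|$, and then Cauchy--Schwarz yields $|f^\top A\, g|\le \|f\|\,\|A g\|\le \lambda d\,\|f\|\,\|g\|$. A one-line computation of the projection norms finishes it: $\|f\|^2 = \|\mathbf{1}_S\|^2 - \tfrac{|S|^2}{n} = |S| - \tfrac{|S|^2}{n}\le |S|$, and likewise $\|g\|^2\le |T|$. Combining the three estimates gives $\bigl|\,|E(S,T)| - \tfrac{d|S||T|}{n}\,\bigr| = |f^\top A\, g| \le \lambda d\sqrt{|S||T|}$, as claimed.

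The only genuine subtlety — the ``main obstacle,'' such as it is — is matching normalization conventions: the definitions in the paper phrase the spectral gap via the normalized (transition) matrix whose eigenvalues lie in $[-1,1]$, whereas the bound $\lambda d$ used above refers to the unnormalized $A$. One must therefore be careful that $\lambda$ denotes the normalized quantity, and in particular that it controls the \emph{most negative} non-principal eigenvalue of $A$ as well as the second largest, so that $|\mu_i|\le \lambda d$ really does hold for all $i\ge 2$. Everything else is routine linear algebra, and the statement follows exactly as in Lemma~2.5 of \cite{Wigderson-exsurvey}.
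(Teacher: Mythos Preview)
The paper does not actually prove this lemma: it is stated in the appendix of ``Previous Results and Definitions'' purely as a citation of Lemma~2.5 in \cite{Wigderson-exsurvey}, with no accompanying argument. Your proof is the standard spectral derivation (orthogonal decomposition of the characteristic vectors along $\mathbf{1}$, Cauchy--Schwarz on the remainder) and is exactly what one finds in the cited survey, so there is nothing to compare; your write-up is correct, and your flag about the normalization convention and about $\lambda$ needing to bound the most negative non-principal eigenvalue as well is apt given the paper's slightly loose phrasing.
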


\begin{definition}[Edge Expansion, \cite{Wigderson-exsurvey}]
  \label{def:edgeexpansion}
Consider a graph $G$ of $n$ nodes and a set $S \subseteq V(G)$.
Let $E(S,\bar{S})$ be the set of edges between $S$ and $G\setminus S$.
The \emph{edge expansion of $G$} is defined as
$$
h(G) := \min\left\{\frac{|E(S,\bar{S})|}{|S|}: \text{$S \subseteq V(G)$ and $|S|
\le n/2$}\right\}.
$$
\end{definition}

\begin{theorem}[Cheeger Inequality, Theorem~2.6 in
  \cite{Wigderson-exsurvey}] \label{thm:cheeger}
  Let $G$ be an expander with spectral gap $1-\lambda$ and edge expansion
  $h(G)$.
  Then
  $$
  \frac{1-\lambda}{2} \le h(G) \le \sqrt{2(1-\lambda)}.
  $$
\end{theorem}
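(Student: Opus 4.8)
The plan is to prove the two inequalities separately, by the classical Cheeger/Alon--Milman argument. It is cleanest to work with the normalized Laplacian $L := I - A/d$ of the $d$-regular graph $G$, whose eigenvalues are $0 = \nu_1 \le \nu_2 \le \cdots \le \nu_n$ with $\nu_2 = 1-\lambda$ (and to read $h(G)$ and $1-\lambda$ under the normalization for which the stated constants are exact, i.e.\ $h(G) = \min_{|S|\le n/2}|E(S,\bar S)|/(d\,|S|)$; the factors of $d$ below cancel and the conclusions carry none). Two facts I will use repeatedly: $x^{\top} L x = \tfrac1d\sum_{\{u,v\}\in E}(x_u - x_v)^2$ for every vector $x$, and the Courant--Fischer identity $1-\lambda = \min\{\,x^{\top}Lx/\|x\|^2 : x \perp \mathbf{1},\ x \neq 0\,\}$.

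For the lower bound $h(G) \ge (1-\lambda)/2$ --- the routine direction --- I would take a set $S$ with $|S| \le n/2$ attaining $h(G)$ and substitute the mean-zero test vector $x = \mathbf{1}_S - \tfrac{|S|}{n}\mathbf{1}$ into the Rayleigh quotient above. Only crossing edges contribute, so $x^{\top}Lx = \tfrac1d|E(S,\bar S)|$, while $\|x\|^2 = |S|(1-|S|/n) \ge \tfrac12|S|$; hence $1-\lambda \le x^{\top}Lx/\|x\|^2 \le 2h(G)$.

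The upper bound $h(G) \le \sqrt{2(1-\lambda)}$ is the real content. Let $f$ be a unit eigenvector for $\nu_2 = 1-\lambda$; replacing $f$ by $-f$ if necessary, assume it has at most $n/2$ positive coordinates, and set $g := \max(f,0)$, the one-sided truncation. The first and most delicate step is the \emph{truncation lemma}
$$
\frac{g^{\top} L g}{\|g\|^2} \;\le\; 1-\lambda ,
$$
which I would obtain by applying the eigenvalue relation $\sum_{u\sim v} f_u = d\lambda\, f_v$ at each vertex $v$ in the support of $g$, together with the pointwise inequality $g \ge f$. Next, factoring $g_u^2 - g_v^2 = (g_u - g_v)(g_u + g_v)$ and invoking Cauchy--Schwarz, combined with $(g_u+g_v)^2 \le 2(g_u^2+g_v^2)$ and $\sum_{\{u,v\}\in E}(g_u^2+g_v^2) = d\|g\|^2$, yields
$$
\sum_{\{u,v\}\in E}\bigl|\,g_u^2 - g_v^2\,\bigr| \;\le\; d\,\|g\|^2\,\sqrt{2(1-\lambda)} ,
$$
and it is this Cauchy--Schwarz step that produces the square root in the statement.

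The last step is the standard \emph{sweep} (layer-cake) argument converting the left-hand side into $h(G)$: for $t \ge 0$ set $S_t := \{\,v : g_v^2 > t\,\}$; since $g$ has at most $n/2$ nonzero coordinates, each $S_t$ satisfies $|S_t| \le n/2$, hence $|E(S_t,\bar S_t)| \ge h(G)\,d\,|S_t|$. Integrating over $t$ and using $\int_0^{\infty}|E(S_t,\bar S_t)|\,dt = \sum_{\{u,v\}\in E}|g_u^2-g_v^2|$ and $\int_0^{\infty}|S_t|\,dt = \|g\|^2$ gives $\sum_{\{u,v\}\in E}|g_u^2-g_v^2| \ge h(G)\,d\,\|g\|^2$, which together with the previous display yields $h(G) \le \sqrt{2(1-\lambda)}$. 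The main obstacle is the truncation lemma (checking that zeroing out the negative part of the eigenvector cannot increase the Rayleigh quotient) and stating the layer-cake identities precisely; everything else is bookkeeping. Since this is a classical statement, a legitimate alternative is simply to cite \cite{Wigderson-exsurvey}.
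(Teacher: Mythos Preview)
The paper does not supply its own proof of this theorem: it appears in the appendix ``Previous Results and Definitions,'' where the authors explicitly state they are restating results from the literature, and the theorem is simply attributed to \cite{Wigderson-exsurvey}. So the ``paper's proof'' is precisely the alternative you mention in your last sentence, namely a bare citation.

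Your sketch is the standard Alon--Milman/Cheeger argument and is correct (modulo the normalization caveat you already flagged: with the paper's Definition~\ref{def:edgeexpansion} of $h(G)$, which lacks the factor $d$ in the denominator, the constants in the displayed inequality would pick up factors of $d$; you are right to work in the normalization that makes the statement as written exact). The truncation lemma, the Cauchy--Schwarz step, and the level-set sweep are all laid out accurately, and you correctly identify the truncation step as the one place requiring care. Compared to the paper, you are simply doing strictly more than it does; nothing is wrong, and nothing differs in approach because there is no approach in the paper to differ from.
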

}

\onlyLong{
\label{sec:pseudo}

\begin{algorithm*}[h!]
\begin{algorithmic}
\item[] 
\item[] \textbf{Case 1: Adversary inserts a node $u$}:
\STATE Try to find a spare vertex for $u$ via a random walk
(\textbf{type-1 recovery}). 
\IF{type-1 recovery fails} 
  \IF{most nodes simulate only $1$ vertex} 
    \STATE Perform \textbf{type-2 recovery} by inflating.
  \ELSE
    \STATE Retry type-1 recovery until it succeeds.
  \ENDIF
\ENDIF
\item[] 
\item[] \textbf{Case 2: Adversary deletes a node $u$}:
\STATE Try distributing vertices that were simulated at $u$ via random walks
(\textbf{type-1
recovery}). 
\IF{type-1 recovery fails} 
  \IF{most nodes simulate many vertices} 
    \STATE Perform \textbf{type-2 recovery} by deflating.
  \ELSE
    \STATE Retry type-1 recovery until it succeeds.
  \ENDIF
\ENDIF
\end{algorithmic}
\caption{High-level overview of our algorithm}
\label{algo:highlevel}
\end{algorithm*}

\begin{algorithm}[h!]
\begin{algorithmic}[1]
\item[]{\textbf{Assumption:} the adversary attaches inserted node $u$ to
    arbitrary node $v$}
\item[] 
\item[] \COMMENT{Try to perform a \textbf{type-1 recovery}:}
\STATE Node $v$ initiates a random walk of length $\ell\log n$ by
generating a token $\tau$ and sending it to a neighbor $u'$ chosen
uniformly at random, but excluding $u$.
Node $u'$ in turn forwards $\tau$ by chosing a neighbor at random and
so forth.
Note that the newly inserted node $u$ is excluded from being reached by
the random walk.
The walk terminates upon reaching a node $w \in \High$ (cf.\
Equation~\eqref{eq:spare}).
\label{line:sampleHighDegree}

\IF{found node $w \in \High$}
  \STATE Transfer a virtual vertex and all its edges (according to
  the virtual graph) from $w$ to $u$. Remove edge between $u$ and $v$ unless
  required by $\Z_t$.
\item[]
\ELSE[the walk did not hit a node in $\High$; perform \textbf{type-2 recovery} if necessary:]
  \STATE Determine current network size $n$ and $|\High|$ via
  $\computeHigh$ (cf.\ Algorithm~\ref{algo:computeHighLow}).
  \IF[Perform type-2 recovery:]{$|\High| < \theta n$}
    \STATE Invoke $\inflate$ (cf.\ Algorithm~\ref{algo:inflate}).
  \ELSE[Sufficiently many nodes with spare virtual vertices are present but
  the walk did not find them. Happens with probability $\le 1/n$.]
    \STATE Repeat from Line~\ref{line:sampleHighDegree}.
  \ENDIF 
\ENDIF
\end{algorithmic}
\caption{\texttt{insertion($u, \theta$)}}
\label{app-algo:insertion}
\end{algorithm}

\begin{algorithm}[h!]
\begin{algorithmic}[1]
\item[] \textbf{Assumption:} adversary deletes an arbitrary node $u$ which
  simulated $k$ virtual vertices. (We prove that $k \in O(1)$).
\STATE A (former) neighbor $v$ of node $u$ attaches all edges of $u$ to itself.

\item[] 
\item[] \COMMENT{Try to perform a \textbf{type-1 recovery}:}
\FOR{each of the $k$ vertices} 
  \STATE Node $v$ initiates a random walk of length $\ell\log n$ by
  generating a token $\tau$ and sending it to a uniformly at random chosen
  neighbor $u'$.
  Node $u'$ in turn forwards $\tau$ by chosing a neighbor at random and
  so forth.
  The walk terminates upon reaching a node $w \in \Low$ (cf.\
  Equation~\eqref{eq:low}).
\ENDFOR
\label{line:ifSampleLowDegree}
\IF{all random walks found nodes $w_1,\dots,w_k \in \Low$:}
  \STATE Distribute the virtual vertices of $u$ and their respective edges (according to
  the virtual graph) from $v$ to $w_1,\dots,w_k$. 
\item[]
\ELSE[Some of the random walks did not find a node in $\Low$; perform
\textbf{type-2 recovery} if necessary:]
  \STATE Determine network size $n$ and $|\Low|$ via $\computeLow$ (cf.\
  Algorithm~\ref{algo:computeHighLow}).
  \IF[Perform type-2 recovery:]{$|\Low| < \theta n$}
    \STATE Invoke $\deflate$ (cf.\ Algorithm~\ref{algo:deflate}).
  \ELSE[Sufficiently many nodes with low load are present but the walk(s) did not find
  them. This happens with probability $\le 1/n$:]
    \STATE  Repeat from Line~\ref{line:ifSampleLowDegree}.
  \ENDIF 
\ENDIF
\end{algorithmic}
\caption{Procedure \texttt{deletion($u,\theta$)}}
\label{app-algo:deletion}
\end{algorithm}

\begin{algorithm}[h!]
\begin{algorithmic}[1]
\item[] \textbf{Given:} $\textsc{diam}$ is the diameter of  $\Z_t$ (i.e. $\textsc{diam} \in O(\log n)$).
\STATE Node $u$ broadcasts an aggregation request to all its neighbors. In
addition to the network size, this request indicates whether to compute
$|\Low|$ or $|\High|$. That is, the request of $u$ traverses the network in
a BFS-like manner and then returns the aggregated values to $u$.
\STATE If a node $w$ receives this request from some neighbor, it computes
the aggregated maximum value, according to whether $w \in \High$ for $\computeHigh$ (resp.\
$w\in\Low$ for $\computeLow$).
\STATE If node $w$ has received the request for the first time, $w$
forwards it to all neighbors (except $v$).
\STATE Once the entire network has been explored this way, i.e., the
request has been forwarded for $\textsc{diam}$ rounds, the aggregated
maximum values of the network size and $|\Low|$ (resp.\ $|\High|$) are
sent back to $u$, which receives them after $\le 2\textsc{diam}$
rounds.  
\end{algorithmic}
\caption{Procedures $\computeHigh$ and $\computeLow$.}
\label{algo:computeHighLow}
\end{algorithm}

\begin{algorithm}[h!]
\begin{algorithmic}[1]
\item[] \textbf{Given:} current network size $n$ (as computed
  by $\computeHigh$). All virtual vertices and all nodes are unmarked.
\item[]
\item[] \textbf{\boldmath Phase 1. Compute larger $p$-cycle:}
\STATE Inserted node $u$ forwards an inflation request through the entire network. 
\STATE Initiating node $u$ floods a request to all other nodes to run this
process simultaneously; takes $O(\log n)$ time. 
\STATE Since every node $u$ knows the same virtual graph $\Z_{t-1}(p_i)$, all nodes
locally compute the same prime $p_{i+1} \in (4p_i,8p_i)$ and therefore the
same virtual expander $\Z_t(p_{i+1})$ with vertex set $\mathbb{Z}_{p_{i+1}}$.
\STATE (Compute the new set of locally simulated virtual vertices.) \\
Let $\alpha = \frac{p_{i+1}}{p_i}$ and define the function 
\begin{equation} \label{eq:cxInflate}
c(x) = \lfloor \alpha(x+1)\rfloor - \lfloor \alpha x\rfloor - 1.
\end{equation}
Replace every $x \in \Sim(u)$ (i.e.\ $x\in \Z_{t-1}(p_{i})$) with a cloud of virtual vertices $y_0,\dots,y_{c(x)}$ where 
$y_k = (\lfloor \alpha x \rfloor + k) \mod\ p_{i+1}$, for $0 \le k \le c(x)$.
That is, $\cloud{y_0}=\cdots=\cloud{y_{c(x)}}=\{y_0,\dots,y_{c(x)}\}$.

\FOR{every $x \in \Sim(u)$ and every $y_k$, $(0\le k\le c(x))$} 
\item[] (Compute the new set of edges.)\\ 
\begin{compactdesc}
\item Cycle edges: Add an edge between $u$ and the nodes $v$ and $v'$
  that simulate
  $y_{k}-1$ and $y_{k}+1$ by using the cycle edges of $\Z_{t-1}(p_i)$ in $G_t$.
\item Inverse edges: Add an edge between $u$ and the
node $v$ that simulates $y_k^{-1}$; node $v$ is found by 
solving a permutation routing instance.
\end{compactdesc}
\ENDFOR

\STATE After the construction of $\Z_t(p_{i+1})$ is complete, we
transfer a (newly generated) virtual vertex to the inserted node $u$ from
its neighbor $v$.
\item[]
\item[] \textbf{Phase 2. Perform load balancing:}
\IF{a node $w$ has $\Load(w)>2\zeta$ (i.e.\ $w\notin\Low$)}
\STATE Node $w$ marks all vertices in $\Sim(w)$ as \emph{full}.
\ENDIF
\IF{a node $v$ has load $k' > 4\zeta$ vertices}
  \item[] (Distribute all except $4\zeta$ vertices to other nodes.) 
  \FOR{each of the $k'-4\zeta$ vertices}
  \STATE Node $v$ marks itself as \emph{contending}.
  \WHILE{$v$ is \emph{contending}}
    \STATE Every \emph{contending} node $v$ performs a random walk of
    length $T=\Theta(\log n)$ on the virtual graph $\Z_t(p_{i+1})$ by
    forwarding a token $\tau_v$.
    This walk is simulated on the actual network $U_t$ (with constant
    overhead).
    To account for congestion, we give this walk $\rho=O(\log^2 n)$
    rounds to complete; once a token has taken $T$ steps it remains at its
    current vertex.
    \STATE If, after $\rho$ rounds, $\tau_v$ has reached a virtual
    vertex $z$ (simulated at some node $w$), no other
    token is currently at $z$, and $z$ is not marked as $\emph{full}$, 
    then $v$ marks itself as
    \emph{non-contending} and transfers a virtual vertex to $w$.
    Moreover, if the new load of $w$ is $>2\zeta$, we mark all vertices at
    $w$ as \emph{full}. \label{line:inflateTransfer}
  \ENDWHILE
  \ENDFOR
\ENDIF
\end{algorithmic}
\caption{Procedure \inflate. This is a simplified inflation procedure yielding amortized bounds. Note that Procedure \stagInflate\ provides the same functionality using  $O(\log n)$ rounds and message whp even in the \textbf{worst case}.}
\label{algo:inflate}
\end{algorithm}

\begin{algorithm*}[h!]
\begin{algorithmic}[1]
\item[] \textbf{Given:} current network size $n$ (as computed
  by $\computeLow$). All virtual vertices and all nodes are unmarked.
\item[]
\item[] \textbf{\boldmath Phase 1. Compute smaller $p$-cycle:}
\STATE Node $u$ forwards a deflation request through the entire network. 
\STATE Initiating node $u$ floods a request to all other nodes to run this
procedure simultaneously; takes $O(\log n)$ time. 
\STATE Since every node $u$ knows the same virtual graph $\Z_{t-1}(p_i)$ of size
$p_i$, all nodes
locally compute the same prime $p_{s} \in (p_i/8,p_i/4)$ and therefore the
same virtual expander $\Z_t(p_s)$ with vertex set $\mathbb{Z}_{p_{s}}$.
\STATE (Compute the new set of locally simulated virtual vertices
$\NewSim(u) \subset \Z_t(p_s)$.) \\
Let $\alpha = \frac{p_{i}}{p_{s}}$.
For every $x \in \Sim(u)$ (i.e.\ $x\in \Z_{t-1}(p_i)$) we compute 
$y_x = \lfloor\frac{x}{\alpha} \rfloor$. \\
If there is no $x'<x$ such that $y_{x'} = y_x$, we add $y_x$ to $\NewSim(u)$.
This yields the (possibly empty) set \\
$\NewSim(u)=\{y_{x_1},\dots,y_{x_k}\}$, \\ 
where $x_1,\dots,x_k \in \Z_{t-1}(p_{i})$ are
a subset of the previously simulated vertices at $u$.
If $\NewSim(u)=\emptyset$, we mark $u$ as \emph{contending}.
For every vertex $y_{x_j}$, we set\\
$\cloud{y_{x_j}} = \{ m \colon  (m-1)\lfloor\alpha\rfloor \le y_{x_j}<m\lfloor\alpha\rfloor \}$.
\FOR{every $y_{x_j} \in \NewSim(u)$, $(1\le j\le k)$, } 
\item[] (Compute the new set of edges.)\\
\begin{compactdesc}
\item Cycle edges: Add an edge between $u$ and the nodes $v$ and $v'$
  that simulate
  $y_{x_j}-1$ and $y_{x_j}+1$ by using the cycle edges of $\Z_{t-1}(p_i)$ in $G_t$.
\item Inverse edges: Add an edge between $u$ and the node $v$ that
  simulates $y_k^{-1}$; node
  $v$ is found by solving a permutation routing instance.
\end{compactdesc}
\ENDFOR
\item[]
\item[] \textbf{Phase 2. Ensure Surjective Mapping:}
\IF{$\Sim(v)=\emptyset$}
\STATE Node $v$ marks itself as \emph{contending}.
\ELSE
\STATE Node $v$ reserves one vertex $z\in \Sim(v)$ for itself by marking $z$ as \emph{taken}.
\ENDIF
\WHILE{$v$ is \emph{contending}}
    \STATE Every \emph{contending} node $v$ performs a random walk of
    length $T=\Theta(\log n)$ on the virtual graph $\Z_t(p_{i+1})$ by
    forwarding a token $\tau_v$. This walk is simulated on the actual
    network $U_t$ (with constant overhead). 
    To account for congestion, we give this walk $\rho=O(\log^2 n)$
    rounds to complete; after $T$ random steps, the token remains at its
    current vertex.
    \STATE If, after $\rho$ rounds, $\tau_v$ has reached a virtual
    vertex $z$ (simulated at some node $w$), no other
    token is currently at $z$, and $z$ is not marked as \emph{taken}, then
    $v$ marks itself as \emph{non-contending} and requests $z$ to be
    transfered from $w$ to $v$ where it is marked as \emph{taken}.
    \ENDWHILE
\end{algorithmic}
\caption{Procedure \deflate. This is a simplified deflation procedure yielding amortized bounds. Note that Procedure \stagDeflate\ provides the same functionality using  $O(\log n)$ rounds and message whp even in the \textbf{worst case}.}
\label{algo:deflate}
\end{algorithm*}

\begin{algorithm*}
\begin{algorithmic}[1]
  \item[] \textbf{Assumption: } Let node $w$ be the node that simulates vertex $0$.
  \STATE Coordinator $w$ maintains local counters of $|\Spare|$, $|\Low|$
  and the network size $n$.
  \STATE The neighbors of $w$ replicate the state of $w$, i.e., everytime
  $w$ updates any of its counters, it sends a message to all of its
  neighbors. If $w$ itself is deleted, normal recovery is performed to
  find a node $w'$ to take over vertex $0$.
  Then, the neighbors transfer the coordinator state to the new
  coordinator $w'$.
  Recall that, according to the virtual graph structure, all former
  neighbors of $w$ become neighbors of $w'$.
  \item[]
  \item[] \textbf{Upon insertion of some node $u$ attached to $v$:}
  \STATE Node $v$ tries to perform type-1 recovery (as in
  $\insertion(u,\theta)$).
  \IF{the recovery succeeds} 
    \STATE Some vertex was transferred to $u$ from some node $u'$.
  Node $v$ sends a message along a shortest
  path in the virtual graph $\Z_{t}$ to the coordinator $w$.
  This message also contains information about changes in the number of
  nodes in $\Spare$ and $\Low$.
  This information only depends on the load at $u'$ and thus does not
  require any additional communication.
    \STATE Coordinator $w$ increases/decreases its local counters
  accordingly. 
  \ELSE
    \STATE Node $v$ sends a request to the coordinator, informing about
    the failed type-1 recovery.
    Coordinator $w$ checks its  (updated) local counters and, if $|\Spare|<3\theta$,
    starts invoking $\stagInflate$.
  \ENDIF
  \item[]
  \item[] \textbf{Upon deletion of some node $u$ previously attached to $v$:}
  \STATE Node $v$ tries to perform type-1 recovery (as in
  $\deletion(u,\theta)$).
  \IF{the recovery succeeds} 
    \STATE The vertices simulated at $u$ were transferred to other nodes
    $u_1',\dots,u_k'$.
    Node $v$ sends a message along a shortest path in $\Z_t$ to the
    coordinator $w$.
This shortest path can be computed locally, since every
node knows the complete virtual graph.
    This message also contains information about changes in the number of
    nodes in $\Spare$ and $\Low$.
    This information only depends on the load at $u_1',\dots,u_k'$ and
    thus does not require additional communication.
    \STATE Coordinator $w$ increases/decreases its local counters
  accordingly. 
  \ELSE
    \STATE Node $v$ sends a request to the coordinator, informing about
    the failed type-1 recovery.
    Coordinator $w$ checks its (updated) local counters and, if $|\Low|<3\theta$,
    starts invoking $\stagDeflate$.
  \ENDIF
\end{algorithmic}
\caption{Advanced handling of type-2 recovery via a coordinator node $w$
  which yields $O(\log n)$ worst case bounds on messages and rounds per
  insertion/deletion.
  (Needed for $\stagInflate$ and $\stagDeflate$.)}
\label{algo:coordinator}
\end{algorithm*}

\begin{algorithm*}[h!]
\begin{algorithmic}[1]
\item[] 
\STATE \textbf{Assumption:} Let $w$ be the \emph{coordinator} node that
maintains local counters of $\Spare$, $\Low$ and the network size (cf.\
Algorithm~\ref{algo:coordinator}). Moreover, the coordinator has computed
the prime number $p_{i+1}$ of the larger $p$-cycle to which we inflate.

\item[]
\item[] \textbf{\boldmath Phase 1. Adding a larger $p$-cycle:}
\STATE The coordinator sends an initiation request to the nodes $I$ that
simulate the vertices $S=\{1,\dots,1/\theta\}$.
This set $I$ are the active nodes in the recovery of the current step.
\item[] 
\item[] (Compute the new set of locally simulated virtual vertices.)
\item[] Every node $u \in I$ does the following:
Let $\alpha = \frac{p_{i+1}}{p_i}$ and define the function 
$c(x) = \lfloor \alpha(x+1)\rfloor - \lfloor \alpha x\rfloor - 1.$
\STATE For every $x \in \Sim(u)$ (i.e.\ $x\in \Z_{t-1}(p_{i})$), node $u$ adds a
cloud of virtual vertices $y_0,\dots,y_{c(x)}$ where $y_k = (\lfloor
\alpha x \rfloor + k) \mod\ p_{i+1}$, for $0 \le k \le c(x)$.  That is,
$\cloud{y_0}=\cdots=\cloud{y_{c(x)}}=\{y_0,\dots,y_{c(x)}\}$.
\STATE Node $u$ adds all such generated vertices $y_i$ to the set
$\NewLoad(u)$.

\FOR{every $x \in \Sim(u)$ and every $y_k$, $(0\le k\le c(x))$} 
\item[] (Compute the new set of edges.)\\ 
\begin{compactdesc}
\item Cycle edges: Add an edge between $u$ and the nodes $v$ and $v'$
  that simulate
  $y_{k}-1$ and $y_{k}+1$ by using the cycle edges of $\Z_{t-1}(p_i)$ in $G_t$.
  In case that $v$ (or $v'$) have not yet been active in Phase~1, we place
  an \emph{intermediate edge} from $u$ to $v$, resp.\ $v'$.
\item Inverse edges: Add an edge between $u$ and the
node that is going to simulate $y_k^{-1}$.
Node $u$ can locally compute the vertex $x'$ (simulated at some node $v'$),
for which the corresponding cloud (containing $y_k^{-1}$) is going to be
added, and hence can add an intermediate edge to the node $v'$. The
communication from $u$ to $v'$ can be established along a shortest path
(in $\Z_{t-1}$). This shortest path can be computed locally, since every
node knows the complete virtual graph.
\end{compactdesc}
\ENDFOR
\STATE After all additional vertices have been generated, the nodes in
$I$, start initiating random walks of length $O(\log n)$ to distribute any
(new) vertices that exceed the treshold of $\NewLoad> 4 \zeta$.
These walks are performed sequentially in some arbitrary order. (Note that
$|I|\in O(1)$.)
\STATE Once these walks are complete, the coordinator is informed and
contacts the nodes $I'$ that simulate the next $1/\theta$ vertices of the
current virtual graph. When the adversary triggers the next step,
these nodes in turn locally generate their portion of $\Z(p_{i+1})$ and so forth. 
After $\theta n$ steps, Phase~1 is complete at all nodes.

\item[]
\item[] \textbf{Phase 2. Discard the old $p$-cycle:}
\STATE The coordinator sends another request to the set of nodes $I$ that
host the first $\lceil 1/\theta \rceil$ vertices in $S$.
\STATE This causes every node in $I$ to drop all edges of $\Z(p_i)$ and
stop simulating the corresponding vertices.
\STATE In the recovery of the next step, the coordinator forwards this
request to the next $1\lceil /\theta \rceil$ nodes and so forth.
After $\theta n$ steps, Phase~2 is complete and all nodes now
(exclusively) simulate the new virtual graph $\Z(p_{i+1})$.
\end{algorithmic}
\caption{{Procedure \stagInflate}}
\label{algo:stagInflate}
\end{algorithm*}

\begin{algorithm*}[h!]
\begin{algorithmic}[1]
\item[]
\STATE \textbf{Assumption:} Let $w$ be the \emph{coordinator} node that
maintains local counters of $\Spare$, $\Low$ and the network size (cf.\
Algorithm~\ref{algo:coordinator}). Moreover, the coordinator has computed
the prime number $p_s$ of the smaller $p$-cycle to which we deflate.
\item[]
\item[] \textbf{\boldmath Phase 1. Compute smaller $p$-cycle:}
\item[] Every node $u \in I$ does the following:
\STATE (Compute the new set of locally simulated virtual vertices
$\NewSim(u) \subset \Z(p_s)$.) 
Let $\alpha = \frac{p_{i}}{p_{s}}$.
For every $x \in \Sim(u)$ (i.e.\ $x\in \Z_{t-1}(p_i)$) we compute 
$y_x = \lfloor\frac{x}{\alpha} \rfloor$. \\
If there is no $x'<x$ such that $y_{x'} = y_x$, we add $y_x$ to $\NewSim(u)$.
This yields the (possibly empty) set \\
$\NewSim(u)=\{y_{x_1},\dots,y_{x_k}\}$, \\ 
where $x_1,\dots,x_k \in \Z_{t-1}(p_{i})$ are
a subset of the previously simulated vertices at $u$.
If $\NewSim(u)=\emptyset$, we mark $u$ as \emph{contending}.
For every vertex $y_{x_j}$, we set\\
$\cloud{y_{x_j}} = \{ m \colon  (m-1)\lfloor\alpha\rfloor \le y_{x_j}<m\lfloor\alpha\rfloor \}$.
\FOR{every $y_{x_j} \in \NewSim(u)$, $(1\le j\le k)$, } 
\item[] (Compute the new set of edges.)\\
\begin{compactdesc}
\item Cycle edges: Add an (intermediate) edge between $u$ and the nodes $v$ and $v'$
  that are going to simulate $y_{x_j}-1$ and $y_{x_j}+1$ by using the
  cycle edges of $\Z_{t-1}(p_i)$ in $G_t$.
\item Inverse edges: Add an(intermediate)  edge between $u$ and the node
  $v$ that is going to simulate $y_k^{-1}$; node $v$ is found by
  communicating along a shortest path in $\Z(p_i)$.
This shortest path can be computed locally, since every
node knows the complete virtual graph.

\end{compactdesc}
\ENDFOR
\STATE After all additional vertices have been generated, the
\emph{contending} nodes in $I$, start initiating random walks of length
$O(\log n)$ to find nodes that have $\NewLoad< 4\zeta$.
Note that, even though only nodes in $I$ have generated their part of the
new $p$-cycle, every node can locally compute its value of $\NewLoad$ upon
being hit by such a random walk and hence can generate such vertices on
the fly.
These walks are performed sequentially in some arbitrary order. (Note that
$|I|\in O(1)$.)
\STATE Once these walks are complete, the coordinator is informed and
contacts the nodes $I'$ that simulate the next $1/\theta$ vertices of the
current virtual graph.
When the adversary triggers the next step, these nodes in turn will
locally generate their portion of $\Z(p_{s})$ and so forth. 
After $\theta n$ steps, Phase~1 is complete at all nodes.
\item[]
\item[] \textbf{Phase 2. Discard the old $p$-cycle:}
\STATE The coordinator sends another request to the set of nodes $I$ that
host the first $\lceil 1/\theta \rceil$ vertices in $S$.
\STATE This causes every node in $I$ to drop all edges of $\Z(p_i)$ and
stop simulating the corresponding vertices.
\STATE In the recovery of the next step, the coordinator forwards this
request to the next $1\lceil /\theta \rceil$ nodes and so forth.
After $\theta n$ steps, Phase~2 is complete and all nodes now
(exclusively) simulate the new virtual graph $\Z(p_{s})$.
\end{algorithmic}
\caption{{Procedure \stagDeflate}}
\label{algo:stagDeflate}
\end{algorithm*}

\section{Extension: Handling Multiple Insertions and Deletions} \label{sec:multiplechanges}
\label{sec:extension}

Our framework can be extended to a model where the
adversary can insert or delete multiple nodes in each step, with
certain assumptions: 

\noindent\textbf{Insertions:} The adversary can insert or delete a set $N$
of up to $\eps n$ many nodes in each step, for some small $\eps>0$. We
restrict the adversary to attach only a constant number of nodes in $N$ to
any node---dropping this restriction will allow the adversary to place the
whole set $N$ at the same node $u$, causing significant congestion due to
$u$'s constant degree and our restriction of having messages of $O(\log
n)$ size.
Note that this might cause type-1 recovery to fail more frequently, since
the number of available spare vertices is depleted within a constant
number of insertion steps.
Nevertheless we can still handle such large-scale insertions via type-2
recovery by using Procedure $\inflate$.

\noindent\textbf{Deletions:} For deletions, we only allow the adversary to
delete nodes that leave the remainder graph connected, i.e., if the adversary removes
nodes $N$ at time $t$, $G_{t-1}\setminus N$ is still connected. Moreover,
for each deleted node there must remain at least one neighbor in the set
$G_{t-1}\setminus N$.
As in the case of insertions, such large-scale deletions might require
Procedure $\deflate$ to be invoked every constant number of steps.

\begin{corollary}[Multiple Insertions/Deletions]
Suppose that the adversary can insert or delete $\le \eps n$ nodes, for some
small $\eps > 0$ in every step adhering to the following conditions:
In case of insertions, the adversary attaches $O(1)$ nodes to any
existing node in the network.
In case of deletions, the remaining graph is connected and, for each
deleted node $u$, some neighbor of $u$ is not deleted.
There exists a distributed algorithm that requires $O(n\log^2 n)$ messages
and $O(\log^3 n)$ rounds (w.h.p.) for recovery in every step.
\end{corollary}


}

\end{document}